\newtheorem{theorem}{Theorem}
\newtheorem{remark}{Remark}
\newtheorem{lemma}{Lemma}
\newtheorem{property}{Property}
\newcommand{\bs}{\boldsymbol}
\newcommand{\Rmnum}[1]{\rm \expandafter\@slowromancap\romannumeral #1@}
\begin{document}
\title{Inverse Moment Matching Based Analysis of Cooperative HARQ-IR over Time-Correlated Nakagami Fading Channels}
\author{Zheng~Shi,
        Haichuan~Ding,
        Shaodan~Ma,
        Kam-Weng~Tam, and~Su~Pan
%\thanks{Manuscript received January 14, 2015; revised May 29, 2015; accepted July 19, 2015. The associate editor coordinating the review of this paper and approving it for publication was M. Elkashlan.}
\thanks{Zheng Shi, Shaodan Ma and Kam-Weng Tam are with the Department of Electrical and Computer Engineering, University of Macau, Macao (e-mail:shizheng0124@gmail.com, shaodanma@umac.mo, kentam@umac.mo).}
\thanks{Haichuan Ding was with University of Macau, and is now with the Department of Electrical and Computer Engineering, University of Florida, U.S.A. (email: dhcbit@gmail.com).}
\thanks{Su Pan is with Nanjing University of Posts and Telecommunications, China (email: supan@njupt.edu.cn).}
%\thanks{The corresponding author is Shaodan Ma.}
%\thanks{This work was supported by the Research Committee of University of Macau under grants: MYRG078(Y1-L2)-FST12-MSD and MYRG101(Y1-L3)-FST13-MSD.}
}
%\author[1]{Zheng~Shi\thanks{The authors are with the Department of Electrical and Computer Engineering, University of Macau, Macau.},~Haichuan~Ding,~Shaodan~Ma,~Kam-Weng Tam,~Su~Pan}
%\author[2]{Haichuan~Ding}
%\author[1]{Shaodan~Ma}
%\author[1]{Kam-Weng Tam}
%\author[3]{Su~Pan}
%\affil[1]{Faculty of Science and Technology, University of Macau, Macau}
%\affil[2]{Faculty of Science and Technology, University of Macau, Macau}
%\affil[3]{Nanjing University of Posts and Telecommunications, Nanjing, China}
%
%\thanks{Manuscript received April 8, 2014; revised April 8, 2014.}
\maketitle
\begin{abstract}
This paper analyzes the performance of cooperative hybrid automatic repeat request with incremental redundancy (HARQ-IR) and proposes a new approach of outage probability approximation for performance analysis. A general time-correlated Nakagami fading channel covering fast fading and Rayleigh fading as special cases is considered here. An efficient inverse moment matching method is proposed to approximate the outage probability in closed-form. The effect of approximation degree is theoretically analyzed to ease its selection. Moreover, diversity order of cooperative HARQ-IR is analyzed. It is proved that diversity order is irrelevant to the time correlation coefficient $\rho$ as long as $\rho<1$ and full diversity from both spatial and time domains can be achieved by cooperative HARQ-IR under time-correlated fading channels. The accuracy of the analytical results is verified by computer simulations and the results reveal that cooperative HARQ-IR scheme can benefit from high fading order and low channel time correlation. Optimal rate selection to maximize the long term average throughput given a maximum allowable outage probability is finally discussed as one application of the analytical results.

%investigates outage performance of cooperative hybrid automatic repeat request with incremental redundancy (HARQ-IR) over time-correlated Nakagami fading channels. The channel model is general enough to cover most of prior analyses as its special cases, e.g. fast fading channels. However, the presence of time correlation complicates the outage analysis due to the difficulty of tackling the product of multiple correlated random variables (RVs). Fortunately, an inverse moment matching method can be developed to derive its CDF owing to the bounded property of its inverse moments. Moreover, the diversity order of cooperative HARQ-IR system is also studied. It is proved that full diversity can be achieved under time-correlated fading channels. The analytical results are verified through Monte Carlo simulations, and the numerical results reveal that cooperative HARQ-IR system can benefit from high fading order and low time correlation. Finally, an optimal rate selection scheme is designed to maximize the long term average throughput given a maximum allowable outage.
\end{abstract}

% Note that keywords are not normally used for peerreview papers.
\begin{IEEEkeywords}
HARQ-IR, time-correlated Nakagami-m fading, inverse moment matching, diversity order.
\end{IEEEkeywords}
\IEEEpeerreviewmaketitle
\section{Introduction}
\IEEEPARstart{I}{n} wireless communications, wireless signals are generally corrupted by noise, interference and channel fading, etc. To boost the performance of wireless communications, a lot of techniques have been proposed in the past few decades. One promising technique proposed lately is cooperative relaying. It exploits spatial diversity to improve system capacity. Another promising technique is hybrid automatic repeat request (HARQ). It combines automatic repeat request (ARQ) and forward error correction (FEC) techniques to combat the detrimental effect of fading and noise \cite{ergen2009mobile}. Essentially, time diversity and coding gain are exploited for performance enhancement. Basically, there are three types of HARQ: Type-I HARQ, HARQ with chase combining (HARQ-CC) \cite{dahlman20134g} and HARQ with incremental redundancy (HARQ-IR) \cite{chen2013survey}. Among them, HARQ-IR can provide superior performance due to the exploration of extra coding gaining through code combining. Clearly, by combining HARQ-IR with cooperative relaying, not only spatial diversity but also time diversity and coding gain can be exploited to boost the communication performance further. Cooperative HARQ-IR thus has attracted considerable research interest recently \cite{maham2012analysis,stanojev2009energy,jinho2013energy,zennaro2011base,chelli2013performance}.

To fully exploit the benefits of cooperative HARQ-IR and provide a theoretical guidance for system design, performance analysis of cooperative HARQ-IR is necessary and meaningful. Some analytical results have been reported in the literature. For instance, an opportunistically relaying HARQ-IR system is investigated and average throughput as well as outage probability are particularly analyzed in \cite{maham2012analysis}. Since it considers a quasi-static Rayleigh fading channel wherein the channel responses corresponding to all HARQ rounds of a single packet are constant, the analysis is only applicable to a low mobility environment. For high mobility environment, communication channels are usually fast fading, i.e., all HARQ rounds experience independent channel realizations. Under such channels, an upper bound of outage probability is derived for a cooperative HARQ-IR system with a single relay using Jensen's inequality in \cite{stanojev2009energy}. Optimal design to maximize the energy efficiency given an outage constraint is then discussed. Similarly, energy efficiency of a cooperative HARQ-IR system with distributed cooperative beamforming (DCB) is analyzed in \cite{jinho2013energy}. With the analytical results, the optimal number of selected relays for DCB under a certain energy efficiency criterion is found. Regarding to a cooperative HARQ-IR enabled uplink cellular system, \cite{zennaro2011base} derives the outage probability using Gaussian approximation, based on which a base station selection scheme is proposed. Considering the limitations of \cite{stanojev2009energy,jinho2013energy,zennaro2011base} where certain approximations are applied in the analysis, \cite{chelli2013performance} derives the exact outage probability of cooperative HARQ-IR in terms of the generalized Fox's H function. It enables further analysis of the average number of transmissions and the long term average throughput (LTAT) in closed-forms.

As aforementioned, most of prior works consider either quasi-static fading channels \cite{maham2012analysis} or fast fading channels \cite{chelli2013performance,stanojev2009energy,jinho2013energy,zennaro2011base}. They are not applicable to time-correlated fading channels which usually occur in low-to-medium mobility environment. Under time-correlated fading channels, performance analysis of cooperative HARQ-IR becomes challenging due to the involvement of the product of multiple correlated random variables (RVs). It is also essentially different from the analysis of HARQ-CC over time-correlated fading channels in \cite{kim2011optimal,jin2011optimal,chaitanya2014adaptive} where a sum of multiple correlated RVs is concerned. To our best knowledge, only few analytical results of HARQ-IR over time-correlated fading channels are available in \cite{shi2015analysis} and \cite{yang2014performance}. Specifically, in \cite{shi2015analysis}, non-cooperative HARQ-IR operating over time-correlated Rayleigh fading channels is analyzed and outage probability is derived in closed-form based on polynomial fitting technique. On the other hand, outage probability of opportunistically relaying HARQ-IR operating over time-correlated Nakagami fading channels is derived based on a Lognormal approximation in \cite{yang2014performance}. Since the Lognormal approximation is developed based on central limit theorem (CLT) which is valid for independent fading channels, the analytical result in \cite{yang2014performance} is not accurate for channels with medium-to-high time correlation.

In this paper, we take a step further to analyze cooperative HARQ-IR operating over general time-correlated Nakagami fading channels. Notice that Nakagami fading is more general than Rayleigh fading and covers Rayleigh fading as a special case with fading order of $1$. Due to the involvement of cooperative relaying and Nakagami distribution, the analytical approach in \cite{shi2015analysis} can not be directly applied here. Since the outage probability can be written as cumulative distribution function (CDF) of a product of multiple correlated RVs, it is essential to determine the CDF of the product of multiple correlated RVs. After proving its inverse moments are bounded, we find that the CDF can be uniquely determined by matching the inverse moments. An efficient inverse moment matching approximation is then proposed to derive the outage probability in closed-form and the effect of approximation degree is theoretically analyzed. It is found that the outage probability can be eventually derived as a weighted sum of the CDFs of Lognormal RVs and the Lognormal approximation in \cite{yang2014performance} in fact is a special case of our analysis with approximation degree of zero. Diversity order of cooperative HARQ-IR is also investigated. It is proved that full diversity can be achieved by cooperative HARQ-IR under time-correlated fading channels, except fully correlated fading channels (i.e., quasi-static fading channels). Our analytical results are then verified through Monte Carlo simulations. It is shown that our analytical approach performs better than that in \cite{yang2014performance}. It is also revealed that low time correlation and high fading order are beneficial to the system performance. Our analytical results can facilitate the system design to achieve various objectives, e.g., the maximization of long term average throughput, the minimization of average number of transmissions and the minimization of outage probability, etc.. Optimal rate selection to maximize the long term average throughput given different outage constraints is finally discussed as an example.

The rest of this paper is organized as follows. In Section \ref{sec_sys_mod}, cooperative HARQ-IR protocol and general time-correlated Nakagami fading channels are introduced. Inverse moment matching method is introduced and the outage probability is derived in Section \ref{sec_out}, while diversity order of cooperative HARQ-IR is investigated in Section \ref{sec:div_ord}. In Section \ref{sec:sim}, the accuracy of our analytical results is verified and optimal rate selection is discussed as an example. Finally, conclusions are drawn in Section \ref{sec_con}.

\section{System Model} \label{sec_sys_mod}
%This paper aims at investigating cooperative HARQ-IR system over time-correlated Nakagami-m fading channels. To begin with this problem, the system model is introduced first.
A cooperative system including one source node, one relay node and one destination node, is considered as shown in Fig. \ref{fig_system_model_relay}. To improve the transmission reliability, HARQ-IR protocol is adopted in each node. Unlike prior analyses, a general time-correlated fading channel which covers the fast fading channel as a special case is considered. In the following, the cooperative HARQ-IR protocol and time-correlated fading channel model are introduced in detail.
%With HARQ-IR protocol, the relay helps the source to deliver the message once the relay successfully decodes the message. Additionally,
\begin{figure*}[!t]
%\normalsize
  \centering
  % Requires \usepackage{graphicx}
  \includegraphics[width=5in]{./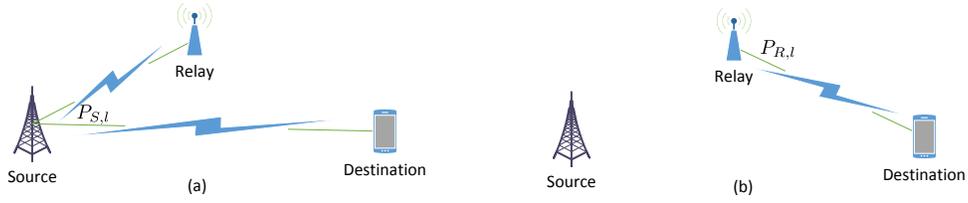}\\
  \caption{A cooperative HARQ-IR system. (a) Broadcasting phase (b) Relaying phase.}\label{fig_system_model_relay}
%\hrulefill
%\vspace*{4pt}
\end{figure*}

\subsection{Cooperative HARQ-IR Protocol}
Following the HARQ-IR protocol, every $b$-bits information message at the source is encoded into a codeword with $M \times L$ symbols \cite{dahlman20134g}, where $M$ is the maximal allowable number of HARQ transmissions. The codeword is then chopped into $M$ sub-codewords, i.e., $\{C_1,\cdots C_l, \cdots C_M\}$, each with length $L$ for transmission in one HARQ round.

%In this paper, we consider HARQ-IR protocol with fixed transmission rate \cite{chelli2013performance,wu2010performance,chelli2014performance}. Thus the codeword is chopped into $M$ sub-codewords of each length $L$, i.e., $\{C_1,\cdots C_l, \cdots C_M\}$. The resulting initial transmission rate is $\mathcal R = b/L$. As a consequence, if $l$ HARQ rounds are utilized to recover $l$ bits information, the effective transmission rate is $\mathcal R/l$.

The cooperative HARQ-IR transmission consists of two phases, i.e., broadcasting and relaying phases, as shown in Fig. \ref{fig_system_model_relay}. During the broadcasting phase, the source sequentially transmits sub-codewords to both the destination and the relay until the maximum number of transmissions is reached or an acknowledgement (ACK) of successful decoding is received from the destination/relay. If the destination successfully decodes the message before the relay, an ACK message will be fed back from the destination to the source and it will be overheard by the relay. The message transmission then completes without moving to the relaying phase. On the contrary, an ACK message will be fed back from the relay and the source will move to the relaying phase. During the relaying phase, the relay encodes the successfully decoded message again and transmits the subsequent sub-codewords to the destination until the maximum number of transmissions is reached or an ACK from the destination is received. The transmitted sub-codewords are different from that transmitted by the source in the broadcasting phase. The destination will use all of the received sub-codewords in both the broadcasting and relaying phases for decoding. In the meanwhile, the source oversees the transmission and listens to the feedback. Once an ACK message is received by the source or the maximum number of transmissions is reached, the transmission for the next $b$ bits information message will be initiated and a new cooperative HARQ-IR transmission will be started. Here an error-free feedback channel is assumed available as \cite{szczecinski2013rate,khosravirad2014rate}, that is, all feedback messages can be correctly decoded.

%the source transmits sub-codewords to the destination. If the destination fails to decode the message, a negative acknowledgement (NACK) message will be fed back. The subsequent sub-codeword will be transmitted in the next HARQ round. Simultaneously, the relay overhears the transmissions and attempts to decode the message using all previously received sub-codewords. Once the relay succeeds to recover the message and the total number of transmissions is still less than $M$, the relaying phase will be initiated. During the relying phase, only the relay is involved into the retransmissions, and the source goes silent. This phase will be sustained until a positive acknowledgement (ACK) message is fed back from the destination or the maximal number of transmissions is reached. It should be mentioned that, as long as the destination successfully decodes the message or the maximum HARQ rounds are attempted, the source will proceed to transmit the next $b$ bits information. Moreover, we assume a error-free feedback channels as \cite{szczecinski2013rate,khosravirad2014rate}, that is, all feedback signals can be correctly decoded.

\subsection{Channel Model}
 In the broadcasting phase, the received signals at the destination and at the relay in the $l$-th HARQ round can be written respectively as
 \begin{equation}
 \label{eqn_received_SD}
 {\bf y}_{SD,l} = h_{SD,l}\sqrt{P_{S,l}}{\bf x}_l + {\bf n}_{SD,l},
\end{equation}
\begin{equation}
\label{eqn_received_SR}
{\bf y}_{SR,l} = h_{SR,l}\sqrt{P_{S,l}}{\bf x}_l + {\bf n}_{SR,l},
\end{equation}
where ${\bf x}_l$ corresponds to the $l$-th sub-codeword $C_l$ and denotes the transmitted signal with unit power in the $l$-th HARQ round; $P_{S,l}$ represents the transmission power in the $l$-th HARQ round; ${\bf n}_{SD,l}$ and ${\bf n}_{SR,l}$ represent zero mean additive white Gaussian noises (AWGNs) with variances $\mathfrak{N}_{SD,l}$ and $\mathfrak{N}_{SR,l}$, respectively; and $h_{SD,l}$ and $h_{SR,l}$ signify the channel coefficients associated with the source-to-destination and the source-to-relay links in the $l$-th HARQ round, respectively.

In the relaying phase, only the relay is involved in the transmission of the subsequent sub-codewords. Accordingly, the received signal at the destination in the $l$-th HARQ round is expressed as
\begin{equation}
\label{eqn_received_RD}
{\bf y}_{RD,l} = h_{RD,l}\sqrt{P_{R,l}}{\bf x}_l + {\bf n}_{RD,l},
\end{equation}
where $P_{R,l}$ represents the transmission power at the relay in the $l$-th HARQ round; ${\bf n}_{RD,l}$ denotes zero mean AWGN with variance $\mathfrak{N}_{RD,l}$; and $h_{RD,l}$ represents the channel coefficient corresponding to the relay-to-destination link.

For notational convenience, we use $h_{ab,l}$ to unify the channel coefficients $h_{SD,l}$, $h_{SR,l}$ and $h_{RD,l}$, where $(a,b) \in \left\{(S,D),(S,R),(R,D)\right\}$. %In this paper, the performance of cooperative HARQ-IR system is investigated in the absence of instantaneous channel state information (CSI) at the transmitter, but only relying on the statistical knowledge of channel information. In addition,
Different from prior analyses \cite{chelli2013performance,jinho2013energy,khosravirad2014rate}, channel time correlation is considered here. Specifically, general time-correlated Nakagami-m fading channels are considered, i.e., the channel coefficients in multiple HARQ rounds $h_{ab,1},\cdots,h_{ab,M}$ are correlated. The amplitudes of the channel coefficients are modeled as multivariate Nakagami-m distributed random variables (RVs) with generalized correlation. The joint probability density function (PDF) corresponding to $|{{{\bf{h}}_{ab}}}| = \left( |h_{ab,1}|,|h_{ab,2}|,\cdots,|h_{ab,M}|\right) $ is given by \cite{beaulieu2011novel}
\begin{multline}\label{eqn_joint_pdf_nakaga}
{f_{|{{\bf{h}}_{ab}}|}}\left( {{x_1}, \cdots ,{x_M}} \right) = \int\nolimits_{t = 0}^\infty  \frac{{{t^{m - 1}}}}{{\Gamma \left( m \right)}}{{\rm{e}}^{ - t}} \\
\times \prod\limits_{l = 1}^M {\frac{{2{x_l}^{2m - 1}}}{{\Gamma \left( m \right){{\left( {\frac{{{\Omega _{ab,l}}\left( {1 - {\lambda _{ab,l}}^2} \right)}}{m}} \right)}^m}}}{e^{ - \frac{{m{x_l}^2}}{{{\Omega _{ab,l}}\left( {1 - {\lambda _{ab,l}}^2} \right)}}}}}  {e^{ - \frac{{{\lambda _{ab,l}}^2t}}{{1 - {\lambda _{ab,l}}^2}}}} \\
 \times
{}_0{F_1}\left( {;m;\frac{{m{x_l}^2{\lambda _{ab,l}}^2t}}{{{\Omega _{ab,l}}{{\left( {1 - {\lambda _{ab,l}}^2} \right)}^2}}}} \right)dt,\,  |{\lambda _{ab,l}}| < 1,
\end{multline}
where $m$ denotes the Nakagami fading order which determines the severity of the fading, $\Omega_{ab,l}$ is the mean channel power gain, i.e., $\Omega_{ab,l}={\rm E}{(\left| {h_{ab,l}} \right|^2)}$, $\lambda_{ab,l}$ denotes generalized correlation coefficient, $\Gamma (\cdot)$ represents Gamma function and ${}_0{F_1}(\cdot)$ denotes the confluent hypergeometric limit function \cite[Eq. 9.14.1]{gradshteyn1965table}. Notice that the time-correlated Rayleigh fading channel in \cite{shi2015analysis} is a special case of this channel model with $m=1$. Moreover, the correlation coefficient $\lambda _{ab,l}$ specifies the cross correlation coefficient ${\rho _{ab}^{l,k}}$ between the squared channel amplitudes ${\left| {{h_{ab,l}}} \right|^2}$ and ${\left| {{h_{ab,k}}} \right|^2}$ as \cite{beaulieu2011novel}
\begin{align}
\label{eqn_cor_fa}
{\rho _{ab}^{l,k}} &= \frac{{{\rm{E}}\left( {{{\left| {{h_{ab,l}}} \right|}^2}{{\left| {{h_{ab,k}}} \right|}^2}} \right) - {\rm{E}}\left( {{{\left| {{h_{ab,l}}} \right|}^2}} \right){\rm{E}}\left( {{{\left| {{h_{ab,k}}} \right|}^2}} \right)}}{{\sqrt {{\rm{Var}}\left( {{{\left| {{h_{ab,l}}} \right|}^2}} \right){\rm{Var}}\left( {{{\left| {{h_{ab,k}}} \right|}^2}} \right)} }} \notag \\
&= {\lambda _{ab,l}}^2{\lambda _{ab,k}}^2,\, 1 \le l \ne k \le M,
\end{align}
where ${\rm Var}(\cdot)$ denotes the operation of variance. It is noteworthy that $|{\lambda _{ab,l}}| < 1$ in (\ref{eqn_joint_pdf_nakaga}), thus $\rho_{ab}^{l,k} < 1$. This channel model covers fast fading channels where the channel coefficients are independent with cross correlation $\rho_{ab}^{l,k} = 0$ as a special case. It is not applicable to quasi-static fading channels where ${h_{ab,1}=h_{ab,2}=\cdots=h_{ab,M}}$ and $\rho_{ab}^{l,k} = 1$. The analysis for quasi-static fading channels has been discussed in \cite{maham2012analysis}. Clearly from (\ref{eqn_joint_pdf_nakaga}), the channel amplitude $|h_{ab,l}|$ follows Nakagami-m distribution (i.e., $|h_{ab,l}| \sim Nakagami(m,\Omega_{ab,l})$) with a PDF of
 \begin{equation}
\label{eqn_nakagami_m}
{f_{\left| {h_{ab,l}} \right|}}\left( x \right) = \frac{{2{{ m }^m}{x^{2m - 1}}}}{{{{\left( {{\Omega _{ab,l}}} \right)}^m}\Gamma \left( m \right)}}\exp \left( { - \frac{m}{{{\Omega_{ab,l}}}}{x^2}} \right),x \in [0, + \infty ).
\end{equation}
%Accordingly, the squared channel amplitude $|h_{ab,l}|^2$ follows Gamma distribution, i.e., $|h_{ab,l}|^2 \sim \mathcal{G}(m,{{{\Omega _{{ab,l}}}/m}})$.
%\begin{equation}\label{eqn_pdf_gamma}
%{f_{{|h_{a,l}|^2}}}\left( x \right) = \frac{{{x^{m - 1}}}}{{{{\left( {\frac{{\Omega _{{a,l}}}}{m}} \right)}^m}\Gamma \left( m \right)}}\exp \left( { - \frac{m}{{\Omega_{{a,l}}}}x} \right),x \in [0, + \infty )
%\end{equation}
%i.e., $|h_{a,l}|^2 \sim \mathcal{G}(m,{{{\Omega _{{a,l}}}/m}})$.
Herein, it should be noted that the channel coefficients associated with different links are independent, that is, ${\bf h}_{SD}$, ${\bf h}_{SR}$ and ${\bf h}_{RD}$ are mutually independent.

Accordingly, the received signal-to-noise ratio (SNR) in the $l$-th HARQ round associated with the link between $a$ and $b$ is written as
\begin{equation}\label{eqn:def_snr_sys_mod}
\gamma_{ab,l} = \frac{P_{a,l}{|h _{ab,l}|^2}}{\mathfrak{N}_{ab,l}},
\end{equation}
and follows Gamma distribution, i.e., $\gamma_{ab,l} \sim \mathcal G(m,\Omega'_{ab,l}/m)$ where $\Omega {'_{ab,l}} = {\Omega _{ab,l}}{P_{a,l}}/{\mathfrak{N}_{ab,l}}$.
By using (\ref{eqn_joint_pdf_nakaga}) and making changes of variables, it is readily proved that the joint distribution of ${\bs \gamma}_{ab} =(\gamma _{ab,1},\cdots,\gamma _{ab,M})$ complies with multivariate Gamma distribution with generalized correlation. More specifically, the joint PDF of ${\bs \gamma}_{ab}$ can be derived as %i.e., ${\bs \gamma}_{ab} \sim {{\mathcal G}} \{( m,\Omega'_{ab,l}, \lambda _{ab,l}), 1 \le l \le M\}$,
\begin{multline}\label{eqn:joint_gamma_l}
{f_{{\bs\gamma _{ab}}}}\left( {{\gamma _1}, \cdots ,{\gamma _M}} \right) = \int\nolimits_{t = 0}^\infty  \frac{{{t^{m - 1}}}}{{\Gamma \left( m \right)}}{{\rm{e}}^{ - t}} \\
\times \prod\limits_{l = 1}^M {\frac{{{\gamma _l}^{m - 1}}}{{\Gamma \left( m \right){{\left( {\frac{{\Omega {'_{ab,l}}\left( {1 - {\lambda _{ab,l}}^2} \right)}}{m}} \right)}^m}}}{e^{ - \frac{{m{\gamma _l}}}{{\Omega {'_{ab,l}}\left( {1 - {\lambda _{ab,l}}^2} \right)}}}}}  {e^{ - \frac{{{\lambda _{ab,l}}^2t}}{{1 - {\lambda _{ab,l}}^2}}}}\\
 \times {}_0{F_1}\left( {;m;\frac{{m{\gamma _l}{\lambda _{ab,l}}^2t}}{{\Omega {'_{ab,l}}{{\left( {1 - {\lambda _{ab,l}}^2} \right)}^2}}}} \right)dt,{\mkern 1mu} |{\lambda _{ab,l}}| < 1.
\end{multline}
Due to the presence of time correlation in the channel coefficients, the analysis becomes much more challenging than the prior works in the literature.

\section{Outage Analysis}\label{sec_out}
The most fundamental performance metric for various HARQ schemes is outage probability. It can well approximate the error probability when Gaussian codes and typical set decoding are applied \cite{caire2001throughput}. In HARQ-IR scheme, message decoding is performed based on the signals received in all the previous HARQ rounds. Outage would happen when the accumulated mutual information per symbol is less than the initial transmission rate $\mathcal R$ \cite{wu2010performance}. Since the cooperative HARQ-IR scheme involves both broadcasting and relaying phases, the destination can acquire information from both the source and the relay. The outage probability at the destination after $K$ HARQ rounds can thus be expressed based on the Total Probability theorem as \cite{chelli2013performance,tabet2007diversity}
%information We define ${P_{out}}\left( K \right)$ as the outage probability at the destination after $K$ HARQ rounds, wherein $K \le M$. In the context of cooperative HARQ schemes, if the earliest HARQ round at which the relay successfully decodes the message is $r$, and only when $r < K$, the relay can help the source to deliver the subsequent packets in the following HARQ rounds. It is worthy to note that if $r \ge K$, %that is, the successful decoding event at the relay occurs after at least $K$ HARQ rounds, so it implies
%it indicates that the relay is unable to assist the source to deliver the message at the first $K$ HARQ rounds. Accordingly, by applying the total probability theorem, ${P_{out}}\left( K \right)$ is formulated as \cite{chelli2013performance,tabet2007diversity}
\begin{equation}\label{eqn_out_k_be_1}
%{P_{out}}\left( K \right) = \sum\nolimits_{r = 1}^{K - 1} {{p_{K,r}}{\left(q_{r-1} - q_r\right)}}  + {p_K}{q_{K-1}},
{P_{out}}\left( K \right) = \sum\nolimits_{r = 1}^{K} {{P_{out}} \left(K|BC=r\right)\Pr \left(BC=r\right)},
\end{equation}
where ${P_{out}} \left(K|BC=r\right)$ denotes the conditional outage probability given there are $r$ broadcasting HARQ rounds among the $K$ HARQ rounds, while $\Pr \left(BC=r\right)$ is the probability that $r$ out of $K$ HARQ rounds are in the broadcasting phase.

%%, but the transmitters (the source and the relay) only have access to the statistical characterization of fading channels
Similarly to \cite{caire2001throughput}, we assume Gaussian codes are applied and channel state information is perfectly known at the receivers. As mentioned in the cooperative protocol, the destination acquires information only from the source in the broadcasting phase, while it gets information only from the relay in the relaying phase. The conditional outage probability ${P_{out}} \left(K|BC=r\right)$ thus can be written as \cite{chelli2013performance,caire2001throughput,tabet2007diversity}
%
%where ${p_{K,r}}$ and ${p_{K}}$ denote the outage probability at the destination after $K$ HARQ rounds given $r < K$ and $r \ge K$, respectively, and $q_r$ is the outage probability at the relay after $r$ HARQ rounds. Therefore, it leads to determining $p_{K,r}$, $p_K$ and $q_r$. From information theoretical perspective, since HARQ-IR accumulates mutual information in all previous HARQ rounds, an outage event of HARQ-IR happens when the accumulated mutual information is less than $\mathcal R$. As derived in \cite{chelli2013performance,tabet2007diversity}, outage probabilities $p_{K,r}$, $p_K$ and $q_r$ can be reformulated as
%
\begin{equation}
{P_{out}}\left( {K|BC = r} \right) = \Pr \left( {{I_{K,r}} \le \mathcal R} \right),
\end{equation}
where ${I_{K,r}}$ is given as \cite{sesia2004incremental},\cite{tse2005fundamentals}
\begin{align}\label{eqn:ins_accumu_inf_2}
&{I_{K,r}} = \frac{1}{L}I (  {{\bf{x}}_1}, \cdots ,{{\bf{x}}_K}; \notag\\
&\qquad\qquad\qquad{{\bf{y}}_{SD,1}}, \cdots , {{\bf{y}}_{SD,r}},{{\bf{y}}_{RD,r + 1}},\cdots , {{\bf{y}}_{RD,K}} |{\bf h}_{K,r} )  \notag\\
&\mathop {\rm{ = }}\limits^{(a)}  \frac{1}{L} \left\{ \sum\limits_{l = 1}^r I\left( { {{\bf{x}}_l};{{\bf{y}}_{SD,l}}|h_{SD,l}} \right)+\sum\limits_{l = r+1}^K I\left( { {{\bf{x}}_l};{{\bf{y}}_{RD,l}}|h_{RD,l}} \right) \right\}.  \notag\\
&= \sum\limits_{l = 1}^r {{{\log }_2}\left( {1 + {\gamma _{SD,l}}} \right){\rm{ + }}} \sum\limits_{l = r + 1}^K {{{\log }_2}\left( {1 + {\gamma _{RD,l}}} \right)},
\end{align}
where $L$ is the number of symbols in each sub-codeword, ${\bf h}_{K,r} =  \{h_{SD,1},\cdots,h_{SD,r},h_{RD,r+1},\cdots,h_{RD,K}\}$, $I\left(\bf{x};\bf{y} | \bf{z} \right)$ denotes the conditional mutual information of random variables $\bf{x}$ and $\bf{y}$ given $\bf{z}$, and (a) holds since the inputs ${{\bf{x}}_1}, \cdots ,{{\bf{x}}_K}$ are independent and the channels are memoryless.
%
%${I_{K,r}} = \frac{1}{L}I\left( {\left. {{{\bf{x}}_1}, \cdots ,{{\bf{x}}_K};{{\bf{y}}_{SD,1}}, \cdots ,{{\bf{y}}_{SD,r}},{{\bf{y}}_{RD,r + 1}}, \cdots ,{{\bf{y}}_{RD,K}}} \right|{\bf h}_{K,r}} \right)$ where $L$ is the number of symbols in each sub-codeword, ${\bf h}_{K,r} =  \{h_{SD,1},\cdots,h_{SD,r},h_{RD,r+1},\cdots,h_{RD,K}\}$, and $I\left(\bf{x};\bf{y} | \bf{z} \right)$ denotes the conditional mutual information of random variables $\bf{x}$ and $\bf{y}$ given $\bf{z}$ \cite{sesia2004incremental},\cite{tse2005fundamentals}. Since memoryless channels are considered and Gaussian codes are applied, i.e., the inputs ${{{\bf{x}}_1}, \cdots ,{{\bf{x}}_K}}$ are independent, the accumulated mutual information per symbol ${I_{K,r}}$ can be derived as \cite[Eq. B.48]{tse2005fundamentals}
%
It follows the conditional outage probability as
\begin{multline}\label{eqn_out_def_re}
{P_{out}}\left( {K|BC = r} \right) = \\
\left\{ {\begin{array}{*{20}{c}}
{\Pr \left( \begin{array}{l}
\sum\limits_{l = 1}^r {{{\log }_2}\left( {1 + {\gamma _{SD,l}}} \right){\rm{ + }}} \\
\sum\limits_{l = r + 1}^K {{{\log }_2}\left( {1 + {\gamma _{RD,l}}} \right)}  < {\cal R}
\end{array} \right)}&{r < K}\\
{\Pr \left( {\sum\limits_{l = 1}^K {{{\log }_2}\left( {1 + {\gamma _{SD,l}}} \right)}  < {\cal R}} \right)}&{r = K.}
\end{array}} \right.
%
%{p_{K,r}} &= \Pr \left( {\sum\limits_{l = 1}^r {{{\log }_2}\left( {1 + {\gamma _{SD,l}}} \right)} {\rm{ + }}\sum\limits_{l = r + 1}^K {{{\log }_2}\left( {1 + {\gamma _{RD,l}}} \right)}  < \mathcal R} \right) \notag\\
% &= \Pr \left( { {Y_{K,r}^D \triangleq \prod\limits_{l = 1}^r {\left( {1 + {\gamma _{SD,l}}} \right)} \prod\limits_{l = r + 1}^K {\left( {1 + {\gamma _{RD,l}}} \right)} } < 2^{\mathcal R}} \right) = {F_{Y_{K,r}^D}}\left( {{2^{\mathcal R}}} \right),
\end{multline}
After simple manipulation, it can be rewritten as
\begin{multline}\label{eqn_out_def_re_1}
{P_{out}}\left( {K|BC = r} \right) = \\
 \left\{ {\begin{array}{*{20}{c}}
{\Pr \left( \begin{array}{l}
Y_{K,r}^D \buildrel \Delta \over = \prod\limits_{l = 1}^r {\left( {1 + {\gamma _{SD,l}}} \right)\times} \\
  \prod\limits_{l = r + 1}^K {\left( {1 + {\gamma _{RD,l}}} \right)}  < {2^{\cal R}}
\end{array} \right)}&{r < K}\\
{\Pr \left( {Y_K^D \buildrel \Delta \over = \prod\limits_{l = 1}^K {\left( {1 + {\gamma _{SD,l}}} \right)}  < {2^{\cal R}}} \right)}&{r = K.}
\end{array}} \right.
\end{multline}
On the other hand, since the relaying phase starts only when the relay can successfully decode the message, i.e., the accumulated mutual information per symbol at the relay is not less than the transmission rate $\mathcal R$, the probability $\Pr \left(BC=r\right)$ can similarly be written as
\begin{multline}\label{eqn:out_rel_def}
\Pr \left( {BC = r} \right) = \\
\left\{ {\begin{array}{*{20}{c}}
{\Pr \left( \begin{array}{l}
\sum\limits_{l = 1}^{r - 1} {{{\log }_2}\left( {1 + {\gamma _{SR,l}}} \right)}  < {\cal R},\\
\sum\limits_{l = 1}^r {{{\log }_2}\left( {1 + {\gamma _{SR,l}}} \right)}  \ge {\cal R}
\end{array} \right)}&{{K > r \ge 1}}\\
{\Pr \left( {\sum\limits_{l = 1}^{K - 1} {{{\log }_2}\left( {1 + {\gamma _{SR,l}}} \right)}  < {\cal R}} \right)}&{{K = r \ge 1}.}
\end{array}} \right.
\end{multline}
By defining ${Y_r^R \triangleq \prod\nolimits_{l = 1}^{r} {\left( {1 + {\gamma _{SR,l}}} \right)} }$ and $Y_0^R \triangleq 0$, it can be rewritten as
\begin{multline}
\label{eqn:out_rel_def_1}
\Pr \left(BC=r\right) = \\
\left\{ \begin{array}{lcl}
{\Pr \left( { Y_{r-1}^R <  2^{\mathcal R}} \right) - \Pr \left( { Y_r^R  <  2^{\mathcal R}} \right) } & {K > r \ge 1} \\
{\Pr \left( { Y_{K-1}^R <  2^{\mathcal R}} \right)} & {K=r \ge 1}.
\end{array}\right.
\end{multline}

Clearly from (\ref{eqn_out_k_be_1}), (\ref{eqn_out_def_re_1}) and (\ref{eqn:out_rel_def_1}), the CDFs of the products of multiple shifted SNRs, i.e., $Y_{K,r}^D$, $Y_K^D$, and $Y_r^R $, are essential for the outage analysis. In the literature, there are two kinds of approaches to derive the CDF of the product of multiple RVs: Mellin transform \cite{chelli2013performance} and moment matching method \cite{provost2005moment,provost2012orthogonal}. Mellin transform is effective for the case with independent RVs and however is inapplicable to our analysis since the multiple SNRs are correlated due to the channel time correlation \cite{poularikas2010transforms}. On the other hand, although the moments of the products of multiple shifted SNRs (i.e., $Y_{K,r}^D$, $Y_K^D$, and $Y_r^R $) exist, their moment generation functions (MGFs) do not exist when the number of transmissions $K$ or $r$ is larger than one as proved in Appendix \ref{app:proof_resu_1}. According to \cite[pp. 176-177]{cramer1999mathematical}, the uniqueness of the CDFs thus can not be guaranteed by matching the moments of $Y_{K,r}^D$, $Y_K^D$, and $Y_r^R $ when $K,r>1$, which would result in notable degradation on the accuracy of outage analysis based on moment matching method.

Fortunately, after analyzing the products of multiple shifted SNRs, we found that their inverse moments do have special properties which can facilitate the derivation and guarantee the uniqueness of their CDFs. Based on these findings, we will propose an effective outage analysis approach based on inverse moment matching method. In the following, the details of inverse moment matching method will be first introduced by taking the analysis of the CDF of $Y_K^D$ as an example. The derivations of the CDFs of $Y_r^R $ and $Y_{K,r}^D$ will be briefly introduced later.

\subsection{Inverse Moment Matching Method}\label{sec:dis_type_I}
Denote the PDF of the product of multiple shifted SNRs corresponding to the source-to-destination link $Y_K^D$ as ${f_{Y_{K}^D}}\left( y \right)$. The inverse moment of $Y_K^D$ is defined as $\alpha_n = {\rm E}\{(Y_{K}^D)^{-n}\} = \int_{ 0 }^\infty  {{y^{ - n}}{f_{Y_{K}^D}}\left( y \right)dy}$. As shown in Appendix \ref{app_inv_mom}, it can be explicitly derived as
\begin{multline}\label{eqn_inv_mom_def}
\alpha_n \approx \sum\limits_{{p_1}, \cdots ,{p_K} \in \left[ {1,{N_Q}} \right]} \frac{{\prod\limits_{l = 1}^K {{w_{{p_l}}}{{\left( {1 + \frac{{\Omega {'_{SD,l}}\left( {1 - {\lambda _{SD,l}}^2} \right){\zeta _{{p_l}}}}}{m}} \right)}^{ - n}}} }}{{\Gamma \left( m \right){{\left( {1 + \sum\limits_{l = 1}^K {\frac{{{\lambda _{SD,l}}^2}}{{1 - {\lambda _{SD,l}}^2}}} } \right)}^m}}}  \\
  \times \Psi _2^{\left( K \right)}\left( {m;m, \cdots ,m;{\varpi _1}{\zeta _{{p_1}}}, \cdots ,{\varpi _K}{\zeta _{{p_K}}}} \right),
\end{multline}
where ${\varpi _l} = {\left( {1 + \sum\limits_{k = 1}^K {\frac{{{\lambda _{SD,k}}^2}}{{1 - {\lambda _{SD,k}}^2}}} } \right)^{ - 1}}\frac{{{\lambda _{SD,l}}^2}}{{1 - {\lambda _{SD,l}}^2}}$, $N_Q$ is the quadrature order, the weights $w_{p_l}$ and abscissas ${{\zeta _{p_l}}}$ for $N_Q$ up to $32$ are tabulated in \cite{rabinowitz1959tables}, and $\Psi _2^{\left( K \right)}(;;)$ denotes the confluent form of Lauricella hypergeometric function \cite[Definition A.20]{mathai2009h}. Moreover, the inverse moment $\alpha_n$ has the following property.
\begin{property}\label{inverse_moment}
The inverse moment $\alpha_n$ is bounded in $(0,1]$ and decreases with $n$. The series $\sum\nolimits_{n=0}^\infty{\alpha_n s^n/n!}$ absolutely converges for some $s > 0$.
\end{property}
\begin{proof}
According to the definition of $Y_K^D$, it is clear that $Y_{K}^D \ge 1$ and the inverse moment $\alpha_n$ decreases to zero with $n$, i.e.,
\begin{equation}\label{eqn:inverse_mom_finite}
0 < \alpha_n < \alpha_{n-1} < \cdots < \alpha_0 = 1.
\end{equation}
It follows that
\begin{equation}\label{eqn:conver_inverse_expla}
\sum\nolimits_{n=0}^\infty{|\alpha_n s^n/n!|} \le \sum\nolimits_{n=0}^\infty{|s|^n/n!} = e ^{|s|},
\end{equation}
which means that the series $\sum\nolimits_{n=0}^\infty{\alpha_n s^n/n!}$ absolutely converges for some $s > 0$ according to Lebesgue's monotone convergence theorem.
\end{proof}
Meanwhile, we have the following lemma about inverse moments from \cite{govindarajulu1962theory}.
\begin{lemma}\label{lem:serie_con_unique}
 \cite[Result 4.14]{govindarajulu1962theory} For any RV $Y$ with CDF of $F_{Y}(y)$, if its inverse moments $\alpha_0= 1, \alpha_1,\cdots,$ are finite and the series $\sum\nolimits_{n=0}^\infty{\alpha_n s^n/n!}$ is absolutely convergent for some $s > 0$, its CDF $F_{Y}(y)$ is the only CDF having $\alpha_0, \alpha_1,\cdots,$ as its inverse moments.
 \end{lemma}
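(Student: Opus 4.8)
The plan is to turn this \emph{inverse} moment problem into an ordinary (Stieltjes/Hamburger) moment problem and then invoke a classical determinacy criterion; this is essentially the content of \cite[Result 4.14]{govindarajulu1962theory}. Since $\alpha_n={\rm E}(Y^{-n})$ is assumed finite for every $n$ (and in our application $Y\in\{Y_K^D,Y_r^R,Y_{K,r}^D\}$ satisfies $Y\ge1$), we have $Y>0$ almost surely, so we may set $Z\triangleq1/Y$. Its CDF $F_Z$ is in one-to-one correspondence with $F_Y$, and by construction $\alpha_n={\rm E}(Z^n)$, so $\{\alpha_n\}$ is exactly the ordinary moment sequence of $Z$. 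Any CDF of $Y$ with inverse moments $\{\alpha_n\}$ corresponds to a CDF on $[0,\infty)$ with ordinary moments $\{\alpha_n\}$; hence it suffices to prove that this ordinary moment problem is \emph{determinate}, i.e., admits a unique solution. This reduction is the first step and is mere bookkeeping.

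The crux is that the hypothesis $\sum_{n\ge0}\alpha_n s^n/n!<\infty$ for some $s>0$ is precisely what forces determinacy; without such a growth control the problem can be indeterminate (the log-normal law being the textbook counterexample). The key step is to show that this hypothesis makes the characteristic function of $Z$ analytic near the real axis. Since $Z\ge0$, monotone convergence gives ${\rm E}(e^{tZ})=\sum_{n\ge0}\alpha_n t^n/n!<\infty$ for $0\le t<s$, while ${\rm E}(e^{tZ})\le1$ for $t\le0$; thus for $w\in\mathbb C$ with ${\rm Im}(w)>-s$ one has $|{\rm E}(e^{iwZ})|\le{\rm E}(e^{-{\rm Im}(w)Z})<\infty$. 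A dominated-convergence argument (or Morera's theorem combined with Fubini) then shows that $\phi_Z(w)\triangleq{\rm E}(e^{iwZ})$ is holomorphic on the open half-plane $\{w:{\rm Im}(w)>-s\}\supset\mathbb R$, with Taylor coefficients at $w=0$ equal to $i^n\alpha_n/n!$, i.e., determined by the moment sequence alone.

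To conclude: if $\tilde F$ is another CDF whose inverse moments are $\{\alpha_n\}$, then $\tilde Z\triangleq1/\tilde Y$ (with $\tilde Y\sim\tilde F$) has ordinary moments $\{\alpha_n\}$ and the same growth bound holds (it is a property of the common sequence), so $\phi_{\tilde Z}$ is holomorphic on the same half-plane and shares its Taylor expansion at $0$ with $\phi_Z$; two functions holomorphic on a connected open set and agreeing on a neighborhood of a point coincide throughout, hence $\phi_{\tilde Z}\equiv\phi_Z$ on $\mathbb R$. By the uniqueness theorem for characteristic functions, $\tilde Z$ and $Z$ have the same law, and undoing the substitution gives $\tilde F=F_Y$, which is the claim. (Alternatively, one may avoid complex analysis: the hypothesis forces $\alpha_n\le C\,n!/s^n$ for large $n$, whence $\alpha_{2n}^{-1/(2n)}\ge c/n$ for some $c>0$ and Carleman's series $\sum_n\alpha_{2n}^{-1/(2n)}$ diverges, which already guarantees determinacy of the Hamburger moment problem.)

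The main obstacle I anticipate is the rigorous verification that $\phi_Z$ is holomorphic on the half-plane, i.e., justifying the interchange of expectation with differentiation (or with the contour integral in Morera's theorem). This is exactly where the \emph{absolute} convergence in the hypothesis is used: it furnishes the uniform majorant $\sum_n\alpha_n|w|^n/n!$ needed to apply the dominated convergence theorem. The remaining points — that $Y>0$ a.s., that no probability mass is created or destroyed under $Z=1/Y$, and the elementary bound ${\rm E}(e^{tZ})\le1$ for $t\le0$ — are routine once the inverse moments are assumed finite.
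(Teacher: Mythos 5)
The paper does not actually prove this lemma: it is imported verbatim as a known result, \cite[Result 4.14]{govindarajulu1962theory}, and used as a black box, so there is no in-paper argument to compare against. Your proposal supplies a correct self-contained proof by the standard route: the substitution $Z=1/Y$ converts the inverse moments into the ordinary moment sequence of $Z$, and the hypothesis that $\sum_n \alpha_n s^n/n!$ converges for some $s>0$ is exactly the classical sufficient condition (e.g., Billingsley's Theorem~30.1, or equivalently Carleman's criterion, which you correctly derive from $\alpha_n\le C\,n!/s^n$) for the moment problem to be determinate; uniqueness of $F_Z$ then pulls back to uniqueness of $F_Y$. Two minor points you should tighten if you keep this proof. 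First, finiteness of the $\alpha_n$ does not by itself force $Y>0$ a.s.\ for an arbitrary real RV; the lemma is implicitly about positive RVs (and in the paper's application $Y\ge 1$, so $Z\in(0,1]$ and determinacy is in fact immediate from bounded support). Second, your claim that the competing law $\tilde Z$ ``inherits the same growth bound'' needs the one-line Cauchy--Schwarz step $\mathrm{E}|\tilde Z|^{2n+1}\le(\alpha_{2n}\alpha_{2n+2})^{1/2}$ to control the odd absolute moments before you can assert that $\phi_{\tilde Z}$ is analytic near the real axis; this is standard but is precisely the point at which the even moments do the work, and it becomes unnecessary only if you restrict a priori to nonnegative $\tilde Z$, which is the relevant class here. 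With those caveats your argument is sound and arguably a useful addition, since it makes the paper's key uniqueness ingredient self-contained rather than cited.
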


Based on Property \ref{inverse_moment} and Lemma \ref{lem:serie_con_unique}, it can be concluded that the CDF of the product of multiple shifted SNRs corresponding to the source-destination link $Y_K^D$ can be uniquely determined by matching its inverse moments $\alpha_n$. The PDF $f_{Y_K^D}(y)$ can thus be uniquely determined as shown in the following theorem.
\begin{theorem}\label{the:inverse_mom_based1}
By matching the inverse moments $\alpha_n$, the PDF $f_{Y_K^D}(y)$ can be uniquely expressed as
\begin{equation}\label{eqn:ivner_bsed_fun}
{f_{Y_K^D}}\left( y \right) = {f_b}\left( y \right)\sum\limits_{l = 0}^\infty  {\xi _l {y^{ - l}}},
\end{equation}
where ${f_b}\left( y \right)$ is a nontrivial function of $y$ and denotes a base density function with inverse moments ${\nu _l} = \int\nolimits_{-\infty}^\infty  {{y^{ - l}}{f_b}\left( y \right)dy} $ existing for $l=0,1,\cdots$, and the coefficients $\xi _l$ match the inverse moments $\alpha_n$ such that
\begin{equation}\label{eqn:xi_the}
{\alpha _n} = \int\nolimits_{-\infty}^\infty{y^{-n}f_{Y_K^D}(y)dy}= \sum\limits_{l = 0}^\infty  {{\xi _l}{\nu _{n + l}}}  ,\quad n = 0,1, \cdots.
\end{equation}
\end{theorem}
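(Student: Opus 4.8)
The plan is to establish the representation in two stages: first argue existence and uniqueness of the expansion coefficients $\xi_l$ from the moment-matching equations, and then verify that the resulting series actually reproduces the density $f_{Y_K^D}(y)$. I would begin by fixing a convenient base density $f_b(y)$ — the natural choice here, given that $Y_K^D$ is a product of shifted Gamma variables and Property~\ref{inverse_moment} shows the inverse-moment generating series behaves like that of a bounded variable, is a Lognormal density (or a suitably shifted/scaled version thereof), since its inverse moments $\nu_l = \int y^{-l} f_b(y)\,dy$ are all finite and admit the closed form $\nu_l = \exp(-l\mu + l^2\sigma^2/2)$ up to location/scale constants. The key point is that $f_b$ must have \emph{all} inverse moments finite and must be ``rich enough'' that the moment map is invertible; I would record these as the defining requirements on $f_b$.

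Next I would solve the system \eqref{eqn:xi_the} for the coefficients $\xi_l$. Writing $\alpha_n = \sum_{l\ge 0}\xi_l\,\nu_{n+l}$ for $n=0,1,\dots$, this is a lower- (or upper-) triangular-type infinite linear system once one orders terms by the leading power: the $n=0$ equation determines a relation involving $\xi_0$, and successive equations peel off $\xi_1,\xi_2,\dots$ recursively. Concretely, I would substitute the known forms of $\nu_{n+l}$ and of $\alpha_n$ (from \eqref{eqn_inv_mom_def}) and show the recursion has a unique solution; convergence of $\sum_l \xi_l y^{-l}$ (and of $\sum_l \xi_l \nu_{n+l}$) is controlled by Property~\ref{inverse_moment}, which guarantees $\sum_n \alpha_n s^n/n!$ converges for some $s>0$, together with the rapid decay of $\nu_l$. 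This gives a well-defined candidate function $g(y) = f_b(y)\sum_{l\ge 0}\xi_l y^{-l}$ whose inverse moments, by construction, are exactly $\alpha_0,\alpha_1,\dots$.

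Finally, I would invoke Lemma~\ref{lem:serie_con_unique} to close the argument: since $g(y)$ (more precisely, the measure it defines) and the true law of $Y_K^D$ have the same inverse moments $\alpha_0=1,\alpha_1,\dots$, and since those inverse moments satisfy the absolute-convergence hypothesis of the lemma (established in Property~\ref{inverse_moment}), the lemma forces $g$ to be the density of $Y_K^D$, i.e.\ $f_{Y_K^D}(y) = f_b(y)\sum_{l\ge 0}\xi_l y^{-l}$. One subtlety to handle carefully is that Lemma~\ref{lem:serie_con_unique} is a statement about CDFs/RVs, so I would phrase the conclusion at the level of distributions and only then pass to densities (noting $Y_K^D$ has a density because the $\gamma_{SD,l}$ do); I would also need to check that $g$ is a bona fide probability density — nonnegativity is not obvious from the series, but integrating against $y^0$ gives $\alpha_0 = 1$, and the identification with $f_{Y_K^D}$ retroactively guarantees nonnegativity.

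The main obstacle I anticipate is the interchange-of-summation and convergence bookkeeping in step two: one must justify that $\sum_{l\ge 0}\xi_l\nu_{n+l}$ converges for every $n$, that term-by-term integration of $f_b(y)\sum_l \xi_l y^{-l}$ against $y^{-n}$ is legitimate, and that the recursive solution for $\xi_l$ does not blow up. This is where the choice of $f_b$ with fast-decaying inverse moments and the absolute-convergence property from Property~\ref{inverse_moment} do the real work, and getting the growth estimates on $\xi_l$ to line up with the decay of $\nu_l$ is the delicate part; everything else is essentially a direct appeal to Lemma~\ref{lem:serie_con_unique}.
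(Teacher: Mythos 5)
There is a genuine gap in the middle step of your argument. The linear system \eqref{eqn:xi_the}, $\alpha_n=\sum_{l\ge 0}\xi_l\nu_{n+l}$, is \emph{not} triangular in any ordering: its coefficient matrix is the infinite Hankel matrix $[\nu_{n+l}]_{n,l\ge 0}$ (the finite sections of which are exactly ${\bf A}_N$ in \eqref{eqn:matrix_U_mom_b}), so every equation involves every $\xi_l$ and there is no ``peeling off'' of $\xi_0,\xi_1,\dots$ recursively. Consequently the existence of the coefficient sequence $\{\xi_l\}$ — which is the actual content of the theorem, since uniqueness of the \emph{distribution} is already given by Property~\ref{inverse_moment} plus Lemma~\ref{lem:serie_con_unique} — is never established in your proposal; you have relabelled the hard step as ``convergence bookkeeping.'' There is also a circularity at the end: Lemma~\ref{lem:serie_con_unique} applies to CDFs of random variables, so to use it on your candidate $g(y)=f_b(y)\sum_l\xi_l y^{-l}$ you must first know $g$ is a bona fide density; you concede nonnegativity is ``retroactively guaranteed'' by the identification you are trying to prove.

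The paper's proof takes a different and shorter route that avoids both problems. It substitutes $Z=1/Y_K^D$, so that the inverse moments $\alpha_n$ of $Y_K^D$ become the \emph{ordinary} moments of the bounded variable $Z\in(0,1]$; by Property~\ref{inverse_moment} and the classical determinacy criterion (Cram\'er), $Z$ is moment-determinate, and the established orthogonal-series moment-matching expansion of \cite{provost2005moment,provost2012orthogonal} is invoked to write $f_Z(z)=f_a(z)\sum_l\xi_l z^l$ for a base density $f_a$ with all moments finite. Transforming back via $f_{Y_K^D}(y)=y^{-2}f_Z(y^{-1})$ and setting $f_b(y)=y^{-2}f_a(y^{-1})$ yields \eqref{eqn:ivner_bsed_fun} directly, with \eqref{eqn:xi_the} following by termwise integration. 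In other words, the paper expands the \emph{actual} density (importing existence of the expansion from the cited moment-matching literature) rather than constructing a candidate and matching it, and it is the reciprocal change of variables — absent from your argument — that makes the classical machinery applicable. If you wish to salvage your direct approach, you would need to prove that the polynomials are dense in $L^2(dF_b)$ and that $f_{Y_K^D}/f_b\in L^2(dF_b)$, so that the Fourier coefficients in the orthonormal basis $\{{\bf c}_l^{\rm T}{\bf y}_l\}$ exist and the series converges to $f_{Y_K^D}/f_b$; that is a substantially different (and more delicate) argument than the recursion you sketch.
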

\begin{proof}
Please see Appendix \ref{app:inverse_mom_based}.
\end{proof}

With the unique expression of the PDF $f_{Y_K^D}(y)$ in (\ref{eqn:ivner_bsed_fun}), the PDF can be approximated by truncating the series in (\ref{eqn:ivner_bsed_fun}) as
\begin{equation}\label{eqn:ivner_bsed_fun_trun}
f_{Y_K^D}(y) \approx {\tilde f_N}\left( y \right) = {f_b}\left( y \right)\sum\limits_{l = 0}^N  {\xi _{N,l} {y^{ - l}}} = {f_b}\left( y \right) {{\bs{\xi }}_N}^{\rm T} {{\bf{y}}_N},
\end{equation}
where $N$ denotes the approximation degree, ${{\bf{y}}_N} = {[ {\begin{array}{*{20}{c}}
1&y^{-1}& \cdots &{{y^{-N}}}
\end{array}} ]^{\rm{T}}}$, and ${\bs \xi _N} = {[ {\begin{array}{*{20}{c}}
{\xi _{N,0}}& \cdots &{\xi _{N,N}}
\end{array}} ]^{\rm{T}}}$ is determined by matching the first $N+1$ inverse moments, i.e., ${\alpha _0},\cdots, {\alpha _N}$. Specifically, by matching the first $N+1$ inverse moments as (\ref{eqn:xi_the}), the coefficient vector ${\bs \xi _N}$ should satisfy
\begin{equation}\label{eqn:coeff_matrx_form}
{{\bf{A}}_N}{{\bs{\xi }}_N} = {{\bs{\alpha }}_N},
\end{equation}
where
\begin{equation}\label{eqn:matrix_U_mom_b}
{{\bf{A}}_N} = \left[ {\begin{array}{*{20}{c}}
{{\nu _0}}&{{\nu _1}}& \cdots &{{\nu _N}}\\
{{\nu _1}}&{{\nu _2}}& \cdots &{{\nu _{N + 1}}}\\
 \vdots & \vdots & \ddots & \vdots \\
{{\nu _N}}&{{\nu _{N + 1}}}& \cdots &{{\nu _{2N}}}
\end{array}} \right],
\end{equation}
and ${{\bs{\alpha }}_N} = {[ {\begin{array}{*{20}{c}}
1&{{\alpha}_1}&{{\alpha}_2}& \cdots &{{\alpha}_N}
\end{array}} ]^{\rm T}}$.
For an arbitrary vector ${\bf d }= [d_0,\cdots,d_N]^{\rm T}$, we have ${\bf d}^{\rm{T}}{{\bf{A}}_N} {\bf d} = \int\nolimits_{ 0 }^\infty  {{f_b}\left( y \right){{\left( {\sum\limits_{l = 0}^N {{d _l}{y^{-l}}} } \right)}^2}dy}  \ge 0$ where the equality holds if and only if ${\bf d}=\bf 0$. Therefore, the matrix ${\bf A}_N$ is positive definite and invertible. From (\ref{eqn:coeff_matrx_form}), it follows that
\begin{equation}\label{eqn:direc_inverse_matrix}
%{\tilde f_N}\left( y \right) = {f_b}\left( y \right) \left( {\bf A}_N^{-1}\bs{\alpha}_N \right)^{\rm T} {{\bf{y}}_N}.
{{\bs{\xi }}_N} = {\bf A}_N^{-1}\bs{\alpha}_N.
\end{equation}
Clearly from (\ref{eqn:direc_inverse_matrix}), matrix inversion is involved in the calculation of the coefficient vector in the approximated PDF. It has high complexity and also causes difficulty in the selection of approximation degree. To avoid that, the approximated PDF is reformulated as shown in the following theorem.
\begin{theorem} \label{theorem:recur_coeff}
The approximated PDF can be reformulated as
\begin{equation}\label{eqn:app_re_mom_fina}
{\tilde f_N}\left( y \right) = {f_b}(y)\sum\limits_{l = 0}^N {\eta_l {\bf c}_l^{\rm T} {\bf y}_l},
\end{equation}
where
\begin{equation}\label{eta}
\eta _l = {{{\bf{c}}_l}^{\rm{T}}{{\bs{\alpha }}_l}},
\end{equation} and
\begin{equation}\label{eqn:def_c_lk}
{{\bf{c}}_l} = {\left[ {\begin{array}{*{20}{c}}
{\frac{{ - {{\bf{v}}_{l - 1}}^T{{\bf{A}}_{l - 1}}^{ - 1}}}{{\sqrt {{\nu _{2l}} - {{\bf{v}}_{l - 1}}^T{{\bf{A}}_{l - 1}}^{ - 1}{{\bf{v}}_{l - 1}}} }}}&{\frac{1}{{\sqrt {{\nu _{2l}} - {{\bf{v}}_{l - 1}}^T{{\bf{A}}_{l - 1}}^{ - 1}{{\bf{v}}_{l - 1}}} }}}
\end{array}} \right]^{\rm{T}}}.
\end{equation}
In (\ref{eqn:def_c_lk}), ${{\bf{v}}_l} = {\left[ {\begin{array}{*{20}{c}}
{{\nu _{l + 1}}}& \cdots &{{\nu _{2l + 1}}}
\end{array}} \right]^T}$, ${\bf c}_0=[1]$,  ${\bf A}_0=[1]$, and ${{\bf A}_l}^{-1}$ is given recursively as
\begin{multline}\label{eqn:mat_A_iter}
%{{\bf{A}}_l}^{ - 1} = \left[ {\begin{array}{*{20}{c}}
%{{{\bf{A}}_{l - 1}}^{ - 1} + \frac{{{{\bf{A}}_{l - 1}}^{ - 1}{{\bf{v}}_{l - 1}}{{\bf{v}}_{l - 1}}^{\rm T}{{\bf{A}}_{l - 1}}^{ - 1}}}{{{\nu _{2l}} - {{\bf{v}}_{l - 1}}^{\rm T}{{\bf{A}}_{l - 1}}^{ - 1}{{\bf{v}}_{l - 1}}}}}&{ - \frac{{{{\bf{A}}_{l - 1}}^{ - 1}{{\bf{v}}_{l - 1}}}}{{{\nu _{2l}} - {{\bf{v}}_{l - 1}}^{\rm T}{{\bf{A}}_{l - 1}}^{ - 1}{{\bf{v}}_{l - 1}}}}}\\
%{ - \frac{{{{\bf{v}}_{l - 1}}^{\rm T}{{\bf{A}}_{l - 1}}^{ - 1}}}{{{\nu _{2l}} - {{\bf{v}}_{l - 1}}^{\rm T}{{\bf{A}}_{l - 1}}^{ - 1}{{\bf{v}}_{l - 1}}}}}&{\frac{1}{{{\nu _{2l}} - {{\bf{v}}_{l - 1}}^{\rm T}{{\bf{A}}_{l - 1}}^{ - 1}{{\bf{v}}_{l - 1}}}}}
%\end{array}} \right].
{{\bf{A}}_l}^{ - 1} = \left[ {\begin{array}{*{20}{c}}
{{{\bf{A}}_{l - 1}}^{ - 1} + \frac{{{{\bf{A}}_{l - 1}}^{ - 1}{{\bf{v}}_{l - 1}}{{\bf{v}}_{l - 1}}^{\rm{T}}{{\bf{A}}_{l - 1}}^{ - 1}}}{{{\nu _{2l}} - {{\bf{v}}_{l - 1}}^{\rm{T}}{{\bf{A}}_{l - 1}}^{ - 1}{{\bf{v}}_{l - 1}}}}}\\
{ - \frac{{{{\bf{v}}_{l - 1}}^{\rm{T}}{{\bf{A}}_{l - 1}}^{ - 1}}}{{{\nu _{2l}} - {{\bf{v}}_{l - 1}}^{\rm{T}}{{\bf{A}}_{l - 1}}^{ - 1}{{\bf{v}}_{l - 1}}}}}
\end{array}} \right.\\
\left. {\begin{array}{*{20}{c}}
{ - \frac{{{{\bf{A}}_{l - 1}}^{ - 1}{{\bf{v}}_{l - 1}}}}{{{\nu _{2l}} - {{\bf{v}}_{l - 1}}^{\rm{T}}{{\bf{A}}_{l - 1}}^{ - 1}{{\bf{v}}_{l - 1}}}}}\\
{\frac{1}{{{\nu _{2l}} - {{\bf{v}}_{l - 1}}^{\rm{T}}{{\bf{A}}_{l - 1}}^{ - 1}{{\bf{v}}_{l - 1}}}}}
\end{array}} \right].
\end{multline}

\end{theorem}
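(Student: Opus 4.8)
The plan is to recognize $\mathbf{A}_N$ in (\ref{eqn:matrix_U_mom_b}) as the Gram matrix of the monomials $1,y^{-1},\dots,y^{-N}$ under the bilinear form $\langle g,h\rangle\triangleq\int_0^\infty f_b(y)\,g(y)\,h(y)\,dy$, for which $\langle y^{-i},y^{-j}\rangle=\nu_{i+j}$. Under this viewpoint the reformulation (\ref{eqn:app_re_mom_fina}) is nothing but the expansion of the polynomial (in $y^{-1}$) $\bs\xi_N^{\rm T}\mathbf{y}_N$ in an orthonormal basis obtained by Gram--Schmidt, and the recursions (\ref{eqn:def_c_lk})--(\ref{eqn:mat_A_iter}) are the standard Schur-complement/block formulas. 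First I would record the block structure obtained by splitting off the last row and column of $\mathbf{A}_l$, namely $\mathbf{A}_l=\begin{bmatrix}\mathbf{A}_{l-1}&\mathbf{v}_{l-1}\\ \mathbf{v}_{l-1}^{\rm T}&\nu_{2l}\end{bmatrix}$ with $\mathbf{v}_{l-1}=[\nu_l,\dots,\nu_{2l-1}]^{\rm T}$, the base case being $\mathbf{A}_0=[\nu_0]=[1]$. Since $\mathbf{A}_l$ is positive definite (as already shown just before the theorem via $\mathbf{d}^{\rm T}\mathbf{A}_l\mathbf{d}=\int_0^\infty f_b(y)(\sum_k d_k y^{-k})^2 dy\ge 0$), the Schur complement $s_l\triangleq\nu_{2l}-\mathbf{v}_{l-1}^{\rm T}\mathbf{A}_{l-1}^{-1}\mathbf{v}_{l-1}$ is strictly positive, so the block-inverse formula applies and yields exactly (\ref{eqn:mat_A_iter}); this disposes of the recursion for $\mathbf{A}_l^{-1}$.

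Next I would verify that $q_l(y)\triangleq\mathbf{c}_l^{\rm T}\mathbf{y}_l$, with $\mathbf{c}_l=\tfrac{1}{\sqrt{s_l}}\big[-\mathbf{A}_{l-1}^{-1}\mathbf{v}_{l-1};\,1\big]$ from (\ref{eqn:def_c_lk}), form an orthonormal system. A direct computation using the block structure gives $\mathbf{A}_l\mathbf{c}_l=\tfrac{1}{\sqrt{s_l}}[\mathbf{0};\,s_l]=[\mathbf{0};\,\sqrt{s_l}]$, hence $\langle q_l,y^{-j}\rangle=(\mathbf{A}_l\mathbf{c}_l)_j=0$ for $0\le j\le l-1$ and $\langle q_l,y^{-l}\rangle=\sqrt{s_l}$. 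Expanding $q_k=\sum_{j=0}^k c_{k,j}y^{-j}$ then gives $\langle q_l,q_k\rangle=0$ whenever $k<l$, while $\langle q_l,q_l\rangle=c_{l,l}\sqrt{s_l}=\tfrac{1}{\sqrt{s_l}}\sqrt{s_l}=1$; symmetry of $\langle\cdot,\cdot\rangle$ covers $k>l$. Since $q_0,\dots,q_N$ are orthonormal and $\deg_{y^{-1}}q_l=l$, they form a basis of $\mathrm{span}\{1,y^{-1},\dots,y^{-N}\}$.

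Finally I would perform the expansion. The defining linear system (\ref{eqn:coeff_matrx_form}) reads, componentwise, $\sum_k\nu_{n+k}\xi_{N,k}=\alpha_n$ for $n=0,\dots,N$, i.e. $\langle\bs\xi_N^{\rm T}\mathbf{y}_N,y^{-n}\rangle=\alpha_n$ for all $n\le N$. Writing $\bs\xi_N^{\rm T}\mathbf{y}_N=\sum_{l=0}^N\eta_{N,l}q_l$ in the orthonormal basis, orthonormality gives $\eta_{N,l}=\langle\bs\xi_N^{\rm T}\mathbf{y}_N,q_l\rangle=\sum_{j=0}^l c_{l,j}\langle\bs\xi_N^{\rm T}\mathbf{y}_N,y^{-j}\rangle=\sum_{j=0}^l c_{l,j}\alpha_j=\mathbf{c}_l^{\rm T}\bs\alpha_l$, because every index $j\le l\le N$ is covered by the matching conditions; hence $\eta_{N,l}$ equals the quantity $\eta_l$ in (\ref{eta}) and, crucially, does not depend on $N$. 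Substituting into (\ref{eqn:ivner_bsed_fun_trun}) gives $\tilde f_N(y)=f_b(y)\,\bs\xi_N^{\rm T}\mathbf{y}_N=f_b(y)\sum_{l=0}^N\eta_l\,\mathbf{c}_l^{\rm T}\mathbf{y}_l$, which is (\ref{eqn:app_re_mom_fina}).

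The routine part is the block-inversion identity; the real content, and where I expect to need care, is twofold: (i) arguing that $\langle\cdot,\cdot\rangle$ is a bona fide positive-definite inner product on the relevant polynomial space (so that Gram--Schmidt/Cholesky is legitimate and $s_l>0$), which reduces to the positive-definiteness of $\mathbf{A}_N$ noted in the text; and (ii) the nestedness argument that $\eta_{N,l}$ is independent of $N$ — this is the whole point of the reformulation and relies on the fact that both $q_l$ (a function only of $\nu_0,\dots,\nu_{2l}$) and the constraints it is tested against ($\alpha_0,\dots,\alpha_l$) are hierarchical, so increasing $N$ only appends new terms $\eta_{N+1}q_{N+1}$ without disturbing the earlier coefficients.
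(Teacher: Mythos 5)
Your proof is correct, but it takes a genuinely different route from the paper's. The paper proves the theorem by a direct recursion on the coefficient vector: it writes ${\bs\xi}_N=\left[{\bs\xi}_{N-1}^{\rm T}\;\;0\right]^{\rm T}+{\bf e}_N$, computes the increment ${\bf e}_N={\bf A}_N^{-1}\bigl({\bs\alpha}_N-{\bf A}_N\left[{\bs\xi}_{N-1}^{\rm T}\;\;0\right]^{\rm T}\bigr)$, reduces it via the block structure of ${\bf A}_N$ and the partitioned-inverse/matrix-inversion lemmas to ${\bf e}_N=\eta_N{\bf c}_N$, and telescopes; the orthonormality of $\{{\bf c}_l^{\rm T}{\bf y}_l\}$ is then established \emph{separately} (Remark 1, Appendix E) by a fairly heavy cofactor-expansion computation. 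You instead identify ${\bf A}_N$ as the Gram matrix of $1,y^{-1},\dots,y^{-N}$ under $\langle g,h\rangle=\int_0^\infty f_b\,g\,h\,dy$, obtain orthonormality of $q_l={\bf c}_l^{\rm T}{\bf y}_l$ directly from ${\bf A}_l{\bf c}_l=[{\bf 0}^{\rm T}\;\;\sqrt{s_l}]^{\rm T}$, and read off $\eta_{N,l}=\langle{\bs\xi}_N^{\rm T}{\bf y}_N,q_l\rangle={\bf c}_l^{\rm T}{\bs\alpha}_l$ from the moment-matching conditions, which immediately gives both the reformulation and the $N$-independence of the coefficients. Both arguments hinge on the same Schur-complement block inverse and on the positive definiteness of ${\bf A}_N$ already noted in the text. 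What your approach buys is conceptual economy: the orthogonal-polynomial viewpoint delivers the content of Remark 1 essentially for free and makes the hierarchical (nested) nature of the expansion transparent, whereas the paper's telescoping computation is more elementary linear algebra but needs the separate cofactor argument for orthonormality. Your proof is complete as stated; the only points worth making explicit in a write-up are that $c_{l,l}=1/\sqrt{s_l}\neq 0$ (so $\deg q_l=l$ and $\{q_0,\dots,q_N\}$ indeed spans the degree-$N$ space) and that $s_l>0$ follows from the strict positive definiteness of ${\bf A}_l$, both of which you flag.
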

\begin{proof}
Please see Appendix \ref{app:proof_theorem1}.
\end{proof}

\begin{remark}\label{rem:orth}As shown in Appendix \ref{app:proof_remark1}, the set $\left\{ {{\bf c}_l^{\rm T} {\bf y}_l,l \in {\mathbb{N}}} \right\}$ in (\ref{eqn:app_re_mom_fina}) satisfies the orthogonality as
\begin{equation}\label{eqn:orth_from_theor1}
\left\langle {{{\bf{c}}_l}^{\rm{T}}{{\bf{y}}_l},{{\bf{c}}_k}^{\rm{T}}{{\bf{y}}_k}} \right\rangle  = \left\{ {\begin{array}{*{20}{c}}
{0,}&{l \ne k};\\
{{1},}&{l = k},
\end{array}} \right.
\end{equation}
where $\left\langle {g\left( y \right),h\left( y \right)} \right\rangle$ defines an inner product on $2$-norm Lebesgue spaces $L^2(\mathbb{R},\mathcal{F},u )$\footnote{Herein, $(\mathbb{R},\mathcal{F},\mu )$ is a measure space, where $\mathcal{F}$ is $\sigma$-algebra over $\mathbb{R}$.} with respect to a measure $dF_b(y) = f_b(y)dy$, such that
\begin{equation}\label{eqn:def_inner_produ}
\left\langle {g\left( y \right),h\left( y \right)} \right\rangle  = \int_{ 0 }^\infty  {g\left( y \right)h\left( y \right)dF_b(y)}.
\end{equation}
In other words, the set $\left\{ {{\bf c}_l^{\rm T} {\bf y}_l,l \in {\mathbb{N}}} \right\}$ can be regarded as an orthonormal basis with respect to the measure $dF_b(y)$. Therefore, the approximated PDF ${\tilde f_N}\left( y \right)$ in (\ref{eqn:app_re_mom_fina}) is in fact a linear combination of the orthonormal basis and $\eta_l$ can be regarded as the coordinate of ${\tilde f_N}\left( y \right)$ with respect to the basis vector ${\bf c}_l^{\rm T} {\bf y}_l$. The expression in (\ref{eqn:app_re_mom_fina}) will facilitate the analysis of the convergence of ${\tilde f_N}\left( y \right)$ with respect to $N$ and enable an efficient selection of the approximation degree $N$, which will be discussed later.
\end{remark}

Based on Theorem \ref{theorem:recur_coeff}, the PDF ${\tilde f_N}\left( y \right)$ now can be derived. To proceed, the base density function $f_b(y)$ should be determined first. As shown in Theorem \ref{the:inverse_mom_based1}, the base density function $f_b(y)$ should be chosen to be nontrivial with inverse moments $\nu_k$ existing. Notice that the base density function $f_b(y)$ is equivalent to the approximated PDF $\tilde f_N(y)$ when the approximation degree is set as zero, i.e., $f_{Y_K^D}(y) \approx \tilde f_N(y) = f_b(y)$ when $N=0$. The base density function $f_b(y)$ should be chosen somewhat close to $f_{Y_K^D}(y)$ \cite{provost2005moment,provost2012orthogonal}. Since the logarithm of $Y_K^D$ can be written as a sum of RVs, i.e., $\ln {Y_{K}^D} =  \sum\nolimits_{l = 1}^K {\ln \left( {1 + {\gamma_{SD,l}}} \right)}$, $\ln Y_K^D $ can be roughly approximated as a Gaussian RV based on the central limit theorem when the number of transmissions is large. It is thus natural to choose $f_b(y)$ as the PDF of a Lognormal RV given by
\begin{equation}\label{eqn_lognormal_den}
f{_b}\left( y \right) = \frac{1}{{y\sqrt {2\pi \sigma ^2} }}{e^{ - \frac{{{{\left( {\ln y - \mu } \right)}^2}}}{{2{\sigma }^2}}}} ,\, y \in \left( {0,\infty } \right),
\end{equation}
where $\mu $ and $\sigma ^2$ represent the mean and the covariance of $\ln(Y_K^D)$, respectively. As shown in Appendix \ref{app_m_sig}, the mean $\mu $ and the variance $\sigma ^2$ can be derived as
\begin{equation}\label{eqn_miu}
\mu=\sum\limits _{l=1}^{K} \underbrace{\frac{1}{{\Gamma\left(m\right)}}G_{2,3}^{3,1}\left({\left.{_{0,0,m}^{0,1}}\right|\frac{m}{{\Omega{'_{SD,l}}}}}\right)}_{\triangleq \mu_{l}},
\end{equation}
\begin{multline}\label{eqn_sigma}
  {\sigma ^2} \approx \sum\limits_{l = 1}^K {\left(\frac{1}{{\Gamma \left( m \right)}}{e^{\frac{m}{{\Omega {'_{SD,l}}}}}}\sum\limits_{p = 1}^{{N_Q}} {{w_p}{{\ln }^2}\left( {1 + \frac{{\Omega {'_{SD,l}}}}{m}{\zeta _p}} \right)}  - {\mu _l}^2\right)}  \\
   + 2\sum\limits_{i < j} {\left( \begin{array}{l}
\sum\limits_{{p_i},{p_j} \in \left[ {1,{N_Q}} \right]} {\frac{{\prod\limits_{l = i,j} {{w_{{p_l}}}} \ln \left( {1 + \frac{{{{\Omega }'_{SD,l}}\left( {1 - \lambda _{SD,l}^2} \right)}}{m}{\zeta _{{p_l}}}} \right)}}{{\Gamma \left( m \right){{\left( {1 + \sum\limits_{l = i,j} {\frac{{\lambda _{SD,l}^2}}{{1 - {\lambda _{SD,l}}^2}}} } \right)}^m}}}} \\
 \times \Psi _2^{\left( 2 \right)}\left( {m;m,m;\varpi _{i,j}^i{\zeta _{{p_i}}},\varpi _{i,j}^j{\zeta _{{p_j}}}} \right) - {\mu _i}{\mu _j}
\end{array} \right)} ,
\end{multline}
where $G_{p,q}^{m,n}(\cdot)$ represents Meijer G-function \cite[9.301]{gradshteyn1965table} and ${\varpi ^l_{i,j}} = {\left( {1 + \mathop {\mathop \sum }\limits_{k = i,j} \frac{{{\lambda _{SD,k}}^2}}{{1 - {\lambda _{SD,k}}^2}}} \right)^{ - 1}}\frac{{{\lambda _{SD,l}}^2}}{{1 - {\lambda _{SD,l}}^2}}$.

Given the base density function $f{_b}(y)$ in (\ref{eqn_lognormal_den}), its $k$th inverse moment ${\nu {_k}}$ directly follows as
\begin{equation}\label{eqn_mom}
{\nu _k} = \int_0^\infty  {{y^{-k}}{f_b}\left( y \right)dy}  = {e^{\frac{{{k^2}{{\sigma }^2}}}{2} - k\mu }}.
\end{equation}
The exponential form of ${\nu {_k}}$ then enables the derivation of the element in the vector ${\bf{c}}{_l}$ (\ref{eqn:def_c_lk}) in a closed-form as
\begin{align}\label{eqn_orth_coeff}
  {c_{l,k}} = \frac{{{{\left( { - 1} \right)}^{l + k}}{{e^{k\mu }}{\varsigma ^{\frac{{k - l - 2{k^2}}}{2}}}}\sqrt {\prod\nolimits_{l - 1 \ge t \ge 0} {\left( {1 - {\varsigma^{t - l}}} \right)} } }}{{\prod\nolimits_{t = 0,t \ne k}^l {\left( {1 - {\varsigma^{ - \left| {t - k} \right|}}} \right)} }}, % \notag \\
%  = {\left( { - 1} \right)^{l + k}}\frac{{{e^{\frac{{{{\sigma} ^2}\left( {k - l - 2{k^2}} \right) + 2k\mu }}{2}}}\sqrt {\prod\nolimits_{l - 1 \ge t \ge 0} {\left( {1 - {\varsigma^{t - l}}} \right)} } }}{{\prod\nolimits_{t = 0,t \ne k}^l {\left( {1 - {\varsigma^{ - \left| {t - k} \right|}}} \right)} }},\quad k \in [0,l]
\end{align}
where  $\varsigma = {e^{{{\sigma} ^2}}}$, as proved in Appendix \ref{app_coeff}.

Now putting (\ref{eqn_inv_mom_def}), (\ref{eta}), (\ref{eqn_lognormal_den}) and (\ref{eqn_orth_coeff}) into (\ref{eqn:app_re_mom_fina}), the approximated PDF ${{\tilde f}_N}\left( y \right)$ can be derived. Specifically, it can be written as
\begin{align}\label{eqn_f_N_Y}
{{\tilde f}_N}\left( y \right) &= {f_b}(y)\sum\limits_{l = 0}^N {{\eta _l}\sum\limits_{k = 0}^l {{c_{l,k}}{y^{ - k}}} } \notag \\
&= \sum\limits_{k = 0}^N {\sum\limits_{l = k}^N {{\eta _l}{c_{l,k}}{y^{ - k - 1}}\frac{1}{{\sqrt {2\pi {\sigma ^2}} }}{e^{ - \frac{{{{\left( {\ln y - \mu } \right)}^2}}}{{2{\sigma ^2}}}}}} }.
\end{align}
Accordingly, the approximated CDF ${{\tilde F}_N}\left( y \right)$ can be obtained as
\begin{align}\label{eqn_final_cdf_Y}
\tilde F_N\left( y \right) &= \sum\limits_{k = 0}^N {\sum\limits_{l = k}^N {{\eta _{ l}}{c _{l,k}\nu_k}} \Phi \left( {\frac{{\ln \left( y \right) + k\sigma {^2} - \mu }}{{\sigma }}} \right)} \notag \\
&= \sum\limits_{k = 0}^N {{\kappa _{ k}}\Phi \left( {\frac{{\ln \left( y \right) + k\sigma {^2} - \mu }}{{\sigma }}} \right)},
\end{align}
where $\Phi(\cdot)$ denotes the CDF of a standard normal RV and ${{\kappa _{ k}} = \sum\limits_{l = k}^N {{\eta _{ l}}} {c _{l,k} \nu_k}}$. Clearly, $\sum_{k = 0}^N {{\kappa _{k}}}  = 1$ since $F_N(\infty)  = 1$. It means that the approximated CDF $\tilde F_N\left( y \right)$ in fact is a weighted sum of the CDFs of Lognormal RVs.

By plugging (\ref{eqn_final_cdf_Y}) into (\ref{eqn_out_def_re_1}), the conditional outage probability ${P_{out}} \left(K|BC=K\right)$ can be derived as
\begin{align}\label{eqn_CDF_Z}
{P_{out}} \left(K|BC=K\right) &\approx \tilde F_N\left( 2^{\mathcal R} \right) \notag\\ &=\sum\limits_{k = 0}^N {{\kappa _k}\Phi \left( {\frac{{C{\cal R} + k{\sigma ^2} - \mu }}{\sigma }} \right)},
\end{align}
where $C=\ln 2$.

This inverse moment matching method can be applied to deriving the distributions of $Y_r^R$ and $Y_{K,r}^D$. Clearly in the derivation, it is essential to determine their inverse moments, and the mean and variance of their natural logarithms. Since $\left(\gamma_{SR,1},\cdots,\gamma_{SR,M}\right)$ follows a similar multivariate Gamma distribution with generalized correlation as $\left(\gamma_{SD,1},\cdots,\gamma_{SD,M}\right)$, the inverse moments of $Y_r^R$, and the mean and variance of $\ln Y_r^R$ can be derived respectively as (\ref{eqn_inv_mom_def}), (\ref{eqn_miu}) and (\ref{eqn_sigma}) but with the subscript $SD$ replaced as $SR$. Then the probability $\Pr \left(BC=r\right)$ can be obtained by putting the CDF of $Y_r^R$ into (\ref{eqn:out_rel_def_1}). With respect to $Y_{K,r}^D$, it can be written as a product of two independent RVs, i.e., $Y_{K,r}^D \triangleq Y_1 Y_2$, where $Y_1 = {\prod\nolimits_{l = 1}^r {\left( {1 + {\gamma_{SD,l}}} \right)} } $ and $Y_2 =  {\prod\nolimits_{l = r + 1}^K {\left( {1 + {\gamma_{RD,l}}} \right)} } $. Due to the independence of $Y_1$ and $Y_2$, the inverse moments of $Y_{K,r}^D$ are given by
\begin{equation}\label{eqn:Z_inver_mom}
{\rm {E}}\left( {Y_{K,r}^D}^{-k}\right) = {\rm {E}}\left({Y_1}^{-k}\right){\rm {E}}\left({Y_2}^{-k}\right) = {\tilde \alpha}_{1,k}{\tilde \alpha}_{2,k},
\end{equation}
where ${\tilde \alpha}_{1,k}$ and ${\tilde \alpha}_{2,k}$ denote the $k$th inverse moments of $Y_1$ and $Y_2$, respectively, which can be obtained similarly to (\ref{eqn_inv_mom_def}). Meanwhile, the mean and the variance of $\ln {Y_{K,r}^D}$ can be obtained as
\begin{equation}\label{eqn:Z_mean}
{\rm {E}}\left(\ln {Y_{K,r}^D}\right) = {\rm {E}}\left(\ln Y_1\right) + {\rm {E}}\left(\ln Y_2\right) = {{\tilde \mu _1}} + {{\tilde \mu _2}},
\end{equation}
\begin{equation}\label{eqn:Z_variance}
{\rm {Var}}\left(\ln {Y_{K,r}^D}\right) = {\rm {Var}}\left(\ln Y_1\right) + {\rm {Var}}\left(\ln Y_2\right) = {{\tilde \sigma} _1}^2 + {{\tilde \sigma} _2}^2,
\end{equation}
where $ ({{\tilde\mu _1}}, {{\tilde\sigma _1}{^2}}) $ and $( {{\tilde\mu _2}},{{\tilde\sigma _2}{^2}} )$ denote the mean and the variance of $\ln Y_1$ and $\ln Y_2$, respectively, which can be obtained similarly to (\ref{eqn_miu}) and (\ref{eqn_sigma}). Using the inverse moment matching method, the CDF of ${Y_{K,r}^D}$ can be finally derived as
\begin{equation}\label{eqn_cdf_I_K_k}
F_{Y_{K,r}^D}(y) \approx \sum\limits_{k = 0}^N {{\tilde \kappa _k}\Phi \left( {\frac{{{\ln(y)} + k{({\tilde \sigma_1} ^2 + {\tilde \sigma_2} ^2)} - (\tilde \mu_1 + \tilde\mu_2) }}{\sqrt{{\tilde\sigma_1} ^2 + {\tilde\sigma_2} ^2} }} \right)},
\end{equation}
where ${\tilde \kappa _0},{\tilde \kappa _1},\cdots,{\tilde \kappa _N}$ define the corresponding weightings of Lognormal CDFs. Then the conditional probability ${P_{out}} \left(K|BC=r\right)$ for $r < K$ can be obtained accordingly. Together with the probability of $\Pr \left(BC=r\right)$, the outage probability in (\ref{eqn_out_k_be_1}) directly follows.

%the Therefore, it turns to determine the moments of $ Y_r^R$, and the mean and the variance of $\ln Y_r^R$, which can be similarly derived as (\ref{eqn_inv_mom_def}), (\ref{eqn_miu}) and (\ref{eqn_sigma}), respectively. Hereby, $\Pr \left(BC=r\right)$ can be derived by putting the CDF of $Y_r^R$ into (\ref{eqn:out_rel_def_1}). While for the distribution $Y_{K,r}^D$, it will be derived later via extending inverse moment matching method.

\subsection{Selection of Approximation Degree}
In the inverse moment matching method, an truncation approximation is involved and the approximation degree $N$ should be properly chosen. To this end, the coordinates $\eta_l$ with respect to the orthonormal basis $\left\{ {{\bf c}_l^{\rm T} {\bf y}_l,l \in {\mathbb{N}}} \right\}$ should be analyzed. Recalling $\alpha_n \in (0,1]$ and putting (\ref{eqn_orth_coeff}) into (\ref{eta}), we have
\begin{align}\label{eqn:eta_N_upper}
\left| {{\eta _N}} \right| &\le \frac{{{\varsigma ^{\frac{{ - N}}{2}}}}}{{{\phi ^2}\left( \varsigma^{-1}  \right)}}\sum\limits_{k = 0}^N {{e^{{\sigma ^2}k\left( {\left( {\mu {\sigma ^{ - 2}} + \frac{1}{2}} \right) - k} \right)}}}  \notag \\
&\le \left( {N + 1} \right){\varsigma ^{\frac{{ - N}}{2}}}\frac{{{\varsigma^{{k_0}\left( {\left( {\mu {\sigma ^{ - 2}} + \frac{1}{2}} \right) - {k_0}} \right)}}}}{{{\phi ^2}\left( \varsigma^{-1}  \right)}}= A\left( {N + 1} \right){\varsigma ^{\frac{{ - N}}{2}}},
\end{align}
where $ {k_0} = \frac{{\mu {\sigma ^{ - 2}}}}{2} + \frac{1}{4}$, $\phi \left( q \right)$ denotes Euler function as $\phi \left( q \right) = \prod\limits_{k = 1}^\infty  {\left( {1 - {q^k}} \right)} $ and $A = \frac{{{\varsigma^{{k_0}\left( {\left( {\mu {\sigma ^{ - 2}} + \frac{1}{2}} \right) - {k_0}} \right)}}}}{{{\phi ^2}\left( \varsigma^{-1}  \right)}}$. Clearly, $\eta_N$ approaches to zero as $N$ tends to infinity, which justifies the truncation approximation.

To characterize the error between the PDF ${f_{Y_K^D}}\left( y \right)$ and its approximate ${{\tilde f}_N}\left( y \right)$ in (\ref{eqn:app_re_mom_fina}), a normalized error is generally defined as \cite{provost2012orthogonal}
\begin{equation}\label{eqn:error_pdf}
\epsilon_N\left( y \right) \triangleq \frac{{{f_{Y_K^D}}\left( y \right) - {{\tilde f}_N}\left( y \right)}}{{{f_b}(y)}} = \sum\limits_{l = N + 1}^\infty  {{\eta _l}{\bf{c}}_l^{\rm{T}}{{\bf{y}}_l}}.
\end{equation}
Accordingly, the normalized mean square error (NMSE) is defined as \cite{provost2012orthogonal}
\begin{align}\label{eqn:epsilon_sq_der}
{\left\| {{\epsilon_N}\left( y \right)} \right\|^2} &\triangleq {\left\langle {{\epsilon_N}\left( y \right),{\epsilon_N}\left( y \right)} \right\rangle } = \int_0^\infty  {{{\left( {\sum\limits_{l = N + 1}^\infty  {{\eta _l}{\bf{c}}_l^{\rm{T}}{{\bf{y}}_l}} } \right)}^2}d{F_b}(y)} \notag\\
 &= \sum\limits_{l = N + 1}^\infty  {\sum\limits_{k = N + 1}^\infty  {{\eta _l}{\eta _k}\left\langle {{\bf{c}}_l^{\rm{T}}{{\bf{y}}_l},{\bf{c}}_k^{\rm{T}}{{\bf{y}}_k}} \right\rangle } }
 = \sum\limits_{l = N + 1}^\infty  {{\eta _l}^2}.
\end{align}
Applying the upper bound (\ref{eqn:eta_N_upper}) into (\ref{eqn:epsilon_sq_der}), it follows that
\begin{align}\label{eqn:epsilon_sq_der_upp}
{\left\| {{\epsilon_N}\left( y \right)} \right\|^2} &\le \sum\limits_{l = N + 1}^\infty  {{{\left( {A\left( {l + 1} \right){\varsigma ^{\frac{{ - l}}{2}}}} \right)}^2}} \notag \\
& = {A^2}{\varsigma ^{ - N}}\sum\limits_{l = 1}^\infty  {{{\left( {l + N + 1} \right)}^2}{\varsigma ^{ - l}}} \notag \\
%& = {A^2}{\varsigma ^{ - N}}\left( {\sum\limits_{l = 1}^\infty  {{l^2}{\varsigma ^{ - l}}}  + {{\left( {N + 1} \right)}^2}\sum\limits_{l = 1}^\infty  {{\varsigma ^{ - l}}}  + 2\left( {N + 1} \right)\sum\limits_{l = 1}^\infty  {l{\varsigma ^{ - l}}} } \right) \notag \\
& = {A^2}{\varsigma ^{ - N}}\left( \begin{array}{l}
\frac{{{\varsigma ^{ - 2}} + {\varsigma ^{ - 1}}}}{{{{\left( {1 - {\varsigma ^{ - 1}}} \right)}^3}}} + {\left( {N + 1} \right)^2}\frac{{{\varsigma ^{ - 1}}}}{{1 - {\varsigma ^{ - 1}}}}\\
 + 2\left( {N + 1} \right)\frac{{{\varsigma ^{ - 1}}}}{{{{\left( {1 - {\varsigma ^{ - 1}}} \right)}^2}}}
\end{array} \right) \notag \\
& \le
 {A^2}{\left( {N + 1} \right)^2}{\varsigma ^{ - N}}\frac{{{\varsigma ^{ - 1}}}}{{1 - {\varsigma ^{ - 1}}}}\left( \begin{array}{l}
\frac{{1 + {\varsigma ^{ - 1}}}}{{{{\left( {1 - {\varsigma ^{ - 1}}} \right)}^2}}}\\
 + \frac{2}{{1 - {\varsigma ^{ - 1}}}} + 1
\end{array} \right),
\end{align}
where the last equality holds by using \cite[Eq.0.112, Eq.0.113, Eq.0.114]{gradshteyn1965table}. To guarantee the approximation accuracy, the NMSE should be limited to a small threshold $\varepsilon$, i.e., ${\left\| {{\epsilon_N}\left( y \right)} \right\|^2} \le \varepsilon$. To meet this error constraint and by defining $B = {A^2}\frac{{{\varsigma ^{ - 1}}}}{{1 - {\varsigma ^{ - 1}}}}\left( {\frac{{1 + {\varsigma ^{ - 1}}}}{{{{\left( {1 - {\varsigma ^{ - 1}}} \right)}^2}}} + \frac{{2{\varsigma ^{ - 1}}}}{{\left( {1 - {\varsigma ^{ - 1}}} \right)}} + 1} \right)$, the approximation degree $N$ should be chosen to satisfy
\begin{equation}\label{eqn:trun_order_obt}
B{\left( {N + 1} \right)^2}{\varsigma ^{ - N}} \le \varepsilon.
\end{equation}
It follows that
\begin{equation}\label{eqn:lambert_function_using}
\left( {N + 1} \right)\ln \sqrt {{\varsigma ^{ - 1}}} {e^{\left( {N + 1} \right)\ln \sqrt {{\varsigma ^{ - 1}}} }} \ge \sqrt {\frac{\varepsilon }{B}} \ln \sqrt {{\varsigma ^{ - 1}}} {e^{\ln \sqrt {{\varsigma ^{ - 1}}} }}.
\end{equation}
By the definition of Lambert W function, we have
\begin{equation}\label{eqn:using lambder_N}
\left( {N + 1} \right)\ln \sqrt {{\varsigma ^{ - 1}}} \le W_{-1}\left( {\sqrt {\frac{\varepsilon }{B}} \ln \sqrt {{\varsigma ^{ - 1}}} {e^{\ln \sqrt {{\varsigma ^{ - 1}}} }}} \right),
\end{equation}
where $W_{-1}(\cdot)$ denotes the lower branch of Lambert W function \cite[4.13]{olver2010nist}. It means that
\begin{align}\label{eqn:N_upp}
N &\ge \frac{{W_{-1}\left( {\sqrt {\frac{\varepsilon }{B}} \ln \sqrt {{\varsigma ^{ - 1}}} {e^{\ln \sqrt {{\varsigma ^{ - 1}}} }}} \right)}}{{\ln \sqrt {{\varsigma ^{ - 1}}} }} - 1 \notag \\
& = \frac{{W_{-1}\left( { - {\sigma ^2}/2\sqrt {\frac{\varepsilon }{B}} {e^{ - {\sigma ^2}/2}}} \right)}}{{ - {\sigma ^2}/2}} - 1 \triangleq {\bar N} .
\end{align}
%Here $\bar N$ can be regarded as an upper bound of the approximation degree.
For illustration, the approximation degree $\bar N$ under various NMSE constraints $\varepsilon$ is shown in Fig. \ref{fig_err}, by taking a system with parameters $m=6$, $\Omega {'_{SD,l}}=1$ and $\rho_{SD}^{l,k}=0.5$ as an example. Clearly, the approximation degree $\bar N$ decreases when higher NMSE is allowed.
%
%To meet this error constraint and by defining $B = {A^2}\frac{{{\varsigma ^{ - 1}}}}{{1 - {\varsigma ^{ - 1}}}}\left( {\frac{{1 + {\varsigma ^{ - 1}}}}{{{{\left( {1 - {\varsigma ^{ - 1}}} \right)}^2}}} + \frac{{2{\varsigma ^{ - 1}}}}{{\left( {1 - {\varsigma ^{ - 1}}} \right)}} + 1} \right)$, the approximation degree could be chosen no larger than an upper bound $\bar N$ as
%\begin{equation}\label{eqn:trun_order_obt}
%B{\left( {\bar N + 1} \right)^2}{\varsigma ^{ - \bar N}}= \varepsilon.
%\end{equation}
%It follows that
%\begin{equation}\label{eqn:lambert_function_using}
%\left( {\bar N + 1} \right)\ln \sqrt {{\varsigma ^{ - 1}}} {e^{\left( {\bar N + 1} \right)\ln \sqrt {{\varsigma ^{ - 1}}} }} = \sqrt {\frac{\varepsilon }{B}} \ln \sqrt {{\varsigma ^{ - 1}}} {e^{\ln \sqrt {{\varsigma ^{ - 1}}} }}.
%\end{equation}
%By the definition of Lambert W function, we have
%\begin{equation}\label{eqn:using lambder_N}
%\left( {\bar N + 1} \right)\ln \sqrt {{\varsigma ^{ - 1}}} = W_{-1}\left( {\sqrt {\frac{\varepsilon }{B}} \ln \sqrt {{\varsigma ^{ - 1}}} {e^{\ln \sqrt {{\varsigma ^{ - 1}}} }}} \right),
%\end{equation}
%where $W_{-1}(\cdot)$ denotes the lower branch of Lambert W function \cite[4.13]{olver2010nist}. It means that
%\begin{align}\label{eqn:N_upp}
%N \le \bar N &= \frac{{W_{-1}\left( {\sqrt {\frac{\varepsilon }{B}} \ln \sqrt {{\varsigma ^{ - 1}}} {e^{\ln \sqrt {{\varsigma ^{ - 1}}} }}} \right)}}{{\ln \sqrt {{\varsigma ^{ - 1}}} }} - 1 \notag \\
%& = \frac{{W_{-1}\left( { - {\sigma ^2}/2\sqrt {\frac{\varepsilon }{B}} {e^{ - {\sigma ^2}/2}}} \right)}}{{ - {\sigma ^2}/2}} - 1.
%\end{align}

To verify the inverse moment matching method, ${P_{out}} \left(K|BC=K\right)$ and ${P_{out}} \left(K|BC=r \right)$ with $r<K$ are plotted respectively in Fig. \ref{fig_cdf} and Fig. \ref{fig_poutKk}, taking a system with $m=6$, $\Omega {'_{SD,l}}=1$, $\Omega {'_{RD,l}}=2$, $r=2$ and $\rho_{SD}^{l,k}=\rho_{RD}^{l,k}=0.5$ as an example. Clearly, the gap between the analytical results and Monte Carlo simulation results significantly reduces with $N$. When $N=6$, the analytical result coincides well with simulation results, which validates its accuracy. In addition, the proposed inverse moment matching method performs better than the other two approaches, i.e., Lognormal approximation \cite{yang2014performance} and regular moment matching method \cite{provost2005moment,provost2012orthogonal}\footnote{{The regular moment matching method approximates the CDF of $Y_K^D$ as ${f_{Y_K^D}}(y) \approx {f_{\bar b}}(y)\sum\nolimits_{l = 0}^N {{\tilde \xi  _{N,l}}{y^l}}$ where ${f_{\bar b}}(y)$ is chosen as the PDF of a Lognormal RV with mean $-\mu$ and variance $\sigma^2$ while the coefficient ${\tilde \xi  _{N,l}}$ is determined by matching the first $N$ moments of $Y_K^D$. Since the moment generation function of $Y_K^D$ do not exist, this method cannot guarantee the uniqueness of the CDF, thus limiting the approximation accuracy.}}. Thus it justifies the effectiveness of the proposed method.
\begin{figure}
  \centering
  % Requires \usepackage{graphicx}
  \includegraphics[width=3in,height=2.4in]{./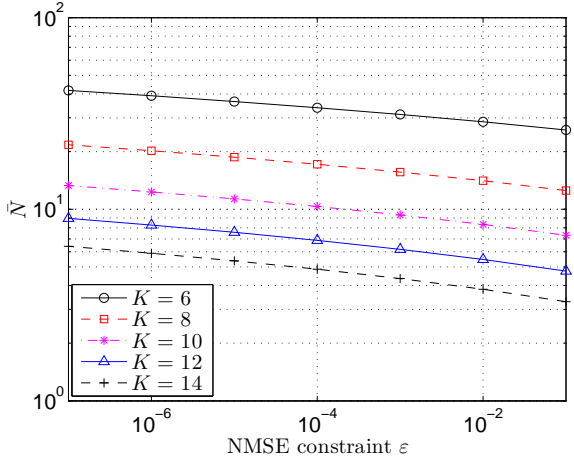}\\
  \caption{{The approximation degree $\bar N$ versus NMSE constraint $\varepsilon$}.}\label{fig_err}
\end{figure}
\begin{figure}
  \centering
  % Requires \usepackage{graphicx}
  \includegraphics[width=3.1in,height=2.4in]{./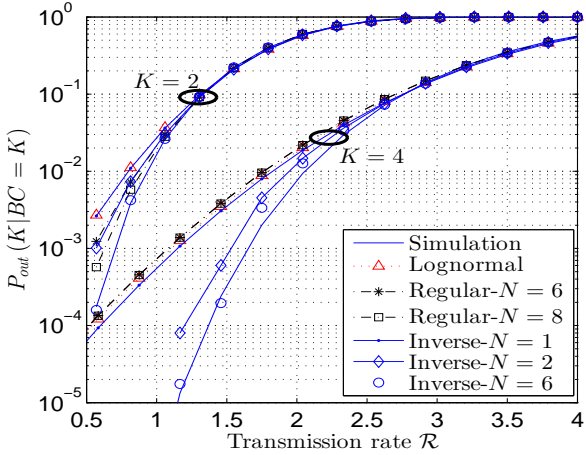}\\
  \caption{Effect of {approximation degree on ${P_{out}} \left(K|BC=K\right)$}.}\label{fig_cdf}
\end{figure}

\begin{figure}
  \centering
  % Requires \usepackage{graphicx}
  \includegraphics[width=3in,height=2.4in]{./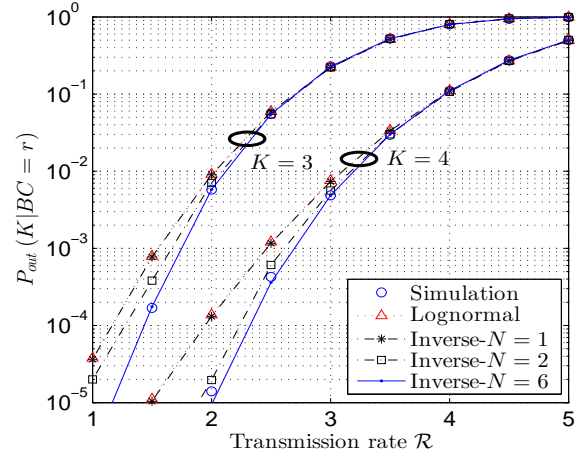}\\
  \caption{{Effect of approximation degree on ${P_{out}} \left(K|BC=r\right)$ with $ r = 2$}.}\label{fig_poutKk}
\end{figure}

\section{Diversity Order}\label{sec:div_ord}
To better understand the behavior of cooperative HARQ-IR schemes, another important performance metric (i.e., diversity order) is also analyzed here. Without loss of generality, the transmission SNR in each HARQ round is set equal, i.e., ${P_{S,l}}/{\mathfrak{N}_{SR,l}} = {P_{S,l}}/{\mathfrak{N}_{SD,l}} = {P_{R,l}}/{\mathfrak{N}_{RD,l}} = {\gamma _T}$ for $l \in [1,M]$. According to \cite{chelli2014performance,zheng2003diversity}, the diversity order $d$ is defined as
\begin{equation}\label{eqn:diver_def}
d =  - \mathop {\lim }\limits_{{\gamma _T} \to \infty } \frac{{\log \left( {P_{out}\left( M \right)} \right)}}{{\log \left( {{\gamma _T}} \right)}}.
\end{equation}

Based on the definition of outage probability in (\ref{eqn_out_k_be_1}) and noticing that the probabilities $\Pr \left(BC=r\right)$ are non-negative with $\sum\nolimits_{r = 1}^M  \Pr \left(BC=r\right)=1$, we have
\begin{multline}\label{eqn:out_upp_low_bound}
\min \left\{ {P_{out}} \left(M|BC=r\right), r \in [1,M] \right\} \le {P_{out}}\left( M \right)  \\
\le \max \left\{ {P_{out}} \left(M|BC=r\right), r \in [1,M]  \right\}.
\end{multline}
%Particularly, $p_M$ can be taken as a special case of $p_{M,r}$ by extending its definition with $r \ge M$. In what follows, we prove that the diversity order with respect to ${p_{M,r}}$ is equal to $Mm$, i.e., $d_{M,r} = - \mathop {\lim }\limits_{{\gamma _T} \to \infty } {{\log \left( {p_{M,r}} \right)}}/{{\log \left( {{\gamma _T}} \right)}} = Mm$.
Meanwhile, from the definition of $Y_{K,r}^D \triangleq \prod\nolimits_{l = 1}^r {\left( {1 + {\gamma _{SD,l}}} \right)} \times \prod\nolimits_{l = r + 1}^K {\left( {1 + {\gamma _{RD,l}}} \right)} $ in (\ref{eqn_out_def_re_1}), we also have
\begin{equation}\label{eqn:inequ_jens}
\left( {1 + {\bar \gamma _{M,r}}} \right) \le {Y_{M,r}^D}  \le \left( {1 + {M^{ - 1}}{\bar \gamma _{M,r}}} \right)^M,
\end{equation}
where ${\bar \gamma _{M,r}}$ represents the sum of SNRs as ${\bar \gamma _{M,r}} = \sum\nolimits_{l = 1}^r {{\gamma _{SD,l}}}  + \sum\nolimits_{l = r + 1}^M {{\gamma _{RD,l}}} $, and the right inequality follows from the inequality of arithmetic and geometric means. Applying (\ref{eqn:inequ_jens}) into (\ref{eqn_out_def_re_1}), the conditional outage probability ${P_{out}} \left(M|BC=r\right)$ is found to be bounded as
\begin{multline}\label{eqn:out_gama_M_r}
{F_{{\bar \gamma _{M,r}}}}\left( {M\left( {{2^{\frac{{\cal R}}{M}}} - 1} \right)} \right) \le {P_{out}} \left(M|BC=r\right) \\
\le {F_{{\bar \gamma _{M,r}}}}\left( {{2^{\cal R}} - 1} \right),
\end{multline}
where ${F_{{\bar \gamma _{M,r}}}}\left( \cdot \right)$ denotes the CDF of $\bar \gamma _{M,r}$ and is given by the following theorem.
\begin{theorem}
\label{theorem:CDF} The CDF of ${\bar \gamma _{M,r}} = \sum\nolimits_{l = 1}^r {{\gamma _{SD,l}}}  + \sum\nolimits_{l = r + 1}^M {{\gamma _{RD,l}}} $ can be written as
\begin{align}\label{eqn:cdf_Y_def_in_meijer_G}
&{F_{{\bar \gamma _{M,r}}}}\left( y \right) = \frac{{{y^{Mm}}}}{{{\gamma _T}^{Mm}{{\left( {\det \left( {\bf{B}} \right)} \right)}^m}\Gamma \left( {Mm + 1} \right)}} \times \notag\\
& \Phi _2^{\left( M \right)}\left( {m, \cdots ,m;Mm + 1; - \frac{y}{{{\gamma _T}{\delta _1}}}, \cdots , - \frac{y}{{{\gamma _T}{\delta _M}}}} \right),
\end{align}
where $\Phi _2^{\left( M \right)}\left(  \cdot  \right)$ denotes the confluent Lauricella function \cite[Def. A.19]{mathai2009h}, $\{\delta _k\}_{k=1}^{M}$ are defined as the eigenvalues of the matrix $\bf B=FE$, $\bf F$ is an $M \times M$ diagonal matrix with diagonal entries as $\{{{\Omega _{SD,1}}/m, \cdots ,{\Omega _{SD,r}}/m,{\Omega _{RD,r + 1}}/m, \cdots ,{\Omega _{RD,M}}/m}\}$, and $\bf E$ is an $M \times M$ symmetric positive definite matrix given by (\ref{eqn:C_mat_def}), shown at the top of this page.
\begin{figure*}[!t]
%\normalsize
  \centering
  % Requires \usepackage{graphicx}
\begin{equation}\label{eqn:C_mat_def}
{\bf{E}} = \left[ {\begin{array}{*{20}{c}}
{\begin{array}{*{20}{c}}
1&{\sqrt {\rho _{SD}^{1,2}} }& \cdots &{\sqrt {\rho _{SD}^{1,r}} }\\
{\sqrt{\rho _{SD}^{2,1}}}&1& \cdots &{\sqrt {\rho _{SD}^{2,r}} }\\
 \vdots & \vdots & \ddots & \vdots \\
{\sqrt {\rho _{SD}^{r,1}} }&{\sqrt {\rho _{SD}^{r,2}} }& \cdots &1
\end{array}}&{\bf{0}}_{r\times (M-r)}\\
{\bf{0}}_{(M-r)\times r}&{\begin{array}{*{20}{c}}
1&{\sqrt {\rho _{RD}^{r + 1,r + 2}} }& \cdots &{\sqrt {\rho _{RD}^{r + 1,M}} }\\
{\sqrt {\rho _{RD}^{r + 2,r + 1}} }&1& \cdots &{\sqrt {\rho _{RD}^{r + 2,M}} }\\
 \vdots & \vdots & \ddots & \vdots \\
{\sqrt {\rho _{RD}^{M,r + 1}} }&{\sqrt {\rho _{RD}^{M,r + 2}} }& \cdots &1
\end{array}}
\end{array}} \right], \quad 0 \le \rho_{SR}^{k,l},\rho_{RD}^{k,l} <1.
\end{equation}
\hrulefill
%\vspace*{4pt}
\end{figure*}
\end{theorem}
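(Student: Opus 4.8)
\textbf{Proof plan for Theorem~\ref{theorem:CDF}.} The plan is to compute the CDF of the weighted sum of correlated Gamma random variables $\bar\gamma_{M,r}$ by first diagonalizing the correlation structure and then invoking a known multivariate Laplace/CDF identity. First I would recall that, from (\ref{eqn:joint_gamma_l}) and the discussion after it, the vector $(\gamma_{SD,1},\dots,\gamma_{SD,r},\gamma_{RD,r+1},\dots,\gamma_{RD,M})$ is a multivariate Gamma vector with common fading order $m$; the $S$-$D$ block and the $R$-$D$ block are independent of each other, which is why the correlation matrix $\mathbf{E}$ in (\ref{eqn:C_mat_def}) is block-diagonal. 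Writing each marginal mean power gain into the diagonal matrix $\mathbf{F}$, the effective covariance-type matrix is $\mathbf{B}=\mathbf{F}\mathbf{E}$, whose eigenvalues $\{\delta_k\}_{k=1}^M$ govern the distribution of $\bar\gamma_{M,r}$.

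The key computational step is the moment generating function. I would show that, under the transmission-SNR normalization ${P_{a,l}}/{\mathfrak{N}_{ab,l}}=\gamma_T$, the MGF of $\bar\gamma_{M,r}$ factorizes as
\begin{equation}\label{eqn:plan_mgf}
\mathrm{E}\!\left(e^{-s\bar\gamma_{M,r}}\right)=\prod_{k=1}^{M}\left(1+s\gamma_T\delta_k\right)^{-m}=\frac{1}{\left(\det(\mathbf{I}+s\gamma_T\mathbf{B})\right)^{m}},
\end{equation}
which follows because a multivariate Gamma vector with generalized correlation, when its quadratic form is diagonalized, becomes a sum of $M$ \emph{independent} Gamma$(m,\gamma_T\delta_k)$ variables. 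Then I would invert the Laplace transform: the product of $M$ distinct shifted powers is exactly the setting in which the confluent Lauricella function $\Phi_2^{(M)}$ arises. Concretely, using the integral/series representation of $\Phi_2^{(M)}$ (e.g.\ \cite[Def.~A.19]{mathai2009h}) together with the standard pair $\mathcal{L}^{-1}\{s^{-(Mm+1)}\}(y)=y^{Mm}/\Gamma(Mm+1)$, one obtains
\begin{equation}\label{eqn:plan_cdf}
F_{\bar\gamma_{M,r}}(y)=\frac{y^{Mm}}{\gamma_T^{Mm}\left(\det(\mathbf{B})\right)^{m}\Gamma(Mm+1)}\,\Phi_2^{(M)}\!\left(m,\dots,m;Mm+1;-\frac{y}{\gamma_T\delta_1},\dots,-\frac{y}{\gamma_T\delta_M}\right),
\end{equation}
matching the claimed expression; the $\det(\mathbf{B})^{m}=\prod_k\delta_k^{m}$ factor appears when the $\gamma_T\delta_k$ scalings are pulled out of each argument.

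The main obstacle is justifying the diagonalization of the correlated Gamma vector, i.e.\ establishing (\ref{eqn:plan_mgf}) rigorously. Unlike the Gaussian case, a linear map does not in general send a multivariate Gamma vector to a vector of independent Gammas; what is true here is the weaker but sufficient fact that the \emph{MGF} of the particular linear combination $\bar\gamma_{M,r}$ equals $\det(\mathbf{I}+s\gamma_T\mathbf{B})^{-m}$. I would derive this directly from the joint PDF (\ref{eqn:joint_gamma_l}): substitute $\gamma_{ab,l}$, carry out the $t$-integral and the ${}_0F_1$ expansion, and recognize the resulting bilinear form, or alternatively use the known representation of the Beaulieu--Zhang multivariate Nakagami model as $|h_{ab,l}|^2$ being a scaled diagonal entry of a complex Wishart-type construction, for which the Laplace transform of any positive linear combination is a determinant raised to $-m$. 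A secondary technical point is confirming that $\mathbf{B}=\mathbf{F}\mathbf{E}$ has strictly positive eigenvalues (so the determinant and the arguments are well-defined): this follows since $\mathbf{E}$ is symmetric positive definite (being block-diagonal with each block a correlation matrix with entries $\sqrt{\rho^{k,l}}<1$ on the off-diagonal) and $\mathbf{F}$ is positive diagonal, so $\mathbf{F}^{1/2}\mathbf{E}\mathbf{F}^{1/2}$ is positive definite and similar to $\mathbf{B}$. Once (\ref{eqn:plan_mgf}) is in hand, the inverse-Laplace step reducing to $\Phi_2^{(M)}$ is a routine, if lengthy, manipulation.
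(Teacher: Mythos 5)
Your proposal follows essentially the same route as the paper: obtain the MGF of $\bar\gamma_{M,r}$ in the determinant/eigenvalue form $\prod_{k=1}^{M}(1+s\gamma_T\delta_k)^{-m}$ (the paper cites the known MGF of a sum of correlated Gamma variates for each block and multiplies them using the independence of the $S$--$D$ and $R$--$D$ links, which is exactly your block-diagonal factorization), and then apply the inverse Laplace transform to arrive at the confluent Lauricella representation. The only difference is that you propose to re-derive the MGF identity from the joint PDF rather than citing it, and you add the (correct) verification that the eigenvalues of $\mathbf{FE}$ are positive; both are sound.
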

\begin{proof}
Under time-correlated fading channels, the SNRs in multiple HARQ rounds corresponding to one link are correlated, i.e., $\{{\gamma _{SD,1}},\cdots,{\gamma _{SD,r}}\}$ are correlated and
$\{{\gamma _{RD,r+1}},\cdots,{\gamma _{RD,M}}\}$ are correlated. The moment generating functions (MGFs) of the sum of correlated RVs, i.e., $\sum\nolimits_{l = 1}^r {{\gamma _{SD,l}}} $ and $\sum\nolimits_{l = r + 1}^M {{\gamma _{RD,l}}}$, can be derived as \cite[Eq. 8]{alouini2001sum}. Since $\sum\nolimits_{l = 1}^r {{\gamma _{SD,l}}} $ and $\sum\nolimits_{l = r + 1}^M {{\gamma _{RD,l}}} $ are independent, the MGF of ${\bar \gamma _{M,r}}$ can be directly written as the product of MGFs of $\sum\nolimits_{l = 1}^r {{\gamma _{SD,l}}} $ and $\sum\nolimits_{l = r + 1}^M {{\gamma _{RD,l}}} $. Then by applying inverse Laplace transform into the MGF of ${\bar \gamma _{M,r}}$, the CDF of ${\bar \gamma _{M,r}}$ can be derived as (\ref{eqn:cdf_Y_def_in_meijer_G}).
%
%Similar to \cite[Theorem 2.1]{kalyani2012asymptotic}, the CDF of ${\gamma _{M,r}}$ can be derived by using moment-generating function (MGF). With the result \cite[Eq. 8]{alouini2001sum},
%The MGFs of $\sum\nolimits_{l = 1}^r {{\gamma _{SD,l}}} $ and $\sum\nolimits_{l = r + 1}^M {{\gamma _{RD,l}}} $ can be obtained. Since $\sum\nolimits_{l = 1}^r {{\gamma _{SD,l}}} $ and $\sum\nolimits_{l = r + 1}^M {{\gamma _{RD,l}}} $ are independent, the MGF of ${\gamma _{M,r}}$ can be obtained straightforward by convolution property of Laplace transform. Then by applying inverse Laplace transform, the CDF of ${\gamma _{M,r}}$ can be derived as (\ref{eqn:cdf_Y_def_in_meijer_G}).
\end{proof}

From (\ref{eqn:diver_def}), (\ref{eqn:out_upp_low_bound}) and (\ref{eqn:out_gama_M_r}), it follows that
{\begin{align}\label{eqn:diversity_bet1}
& - \mathop {\lim }\limits_{{\gamma _T} \to \infty } \frac{\max \left\{{\log \left( {{F_{\bar \gamma _{M,r}}}\left( {\iota} \right)} \right)}, r \in [1,M]\right\}}{{\log \left( {{\gamma _T}} \right)}} \le d  \le  \notag \\
& - \mathop {\lim }\limits_{{\gamma _T} \to \infty } \frac{\min \left \{{\log \left( {{F_{\bar \gamma _{M,r}}}\left( \psi \right)} \right)}, r \in [1,M]\right \}}{{\log \left( {{\gamma _T}} \right)}}.
\end{align}
where ${\iota  = {2^{\cal R}} - 1}$ and $\psi={M\left( {{2^{{M^{ - 1}}{\cal R}}} - 1} \right)}$.}

With (\ref{eqn:cdf_Y_def_in_meijer_G}), the first inequality in (\ref{eqn:diversity_bet1}) can be rewritten as
\begin{align}\label{eqn:limit_ineq}
&d \ge Mm - \notag\\
&\max \left\{ \begin{array}{l}
\mathop {\lim }\limits_{{\gamma _T} \to \infty } \frac{{\log \Phi _2^{\left( M \right)}\left( {m, \cdots ,m;Mm + 1; - \frac{\iota }{{{\gamma _T}{\delta _1}}}, \cdots , - \frac{\iota }{{{\gamma _T}{\delta _M}}}} \right)}}{{\log {\gamma _T}}},\\
r \in [1,M]
\end{array} \right\}.
\end{align}
By using the series representation of the confluent Lauricella function \cite{aalo2005another}, the limit of the confluent Lauricella function is reduced as
\begin{multline}\label{eqn:inf_gamm_conf}
\mathop {\lim }\limits_{{\gamma _T} \to \infty } \Phi _2^{\left( M \right)}\left( {m, \cdots ,m;Mm + 1; - \frac{\iota }{{{\gamma _T}{\delta _1}}}, \cdots , - \frac{\iota }{{{\gamma _T}{\delta _M}}}} \right)  \\
 = 1.
\end{multline}
Putting (\ref{eqn:inf_gamm_conf}) into (\ref{eqn:limit_ineq}) yields $d \ge Mm$. Similarly, the second inequality in (\ref{eqn:diversity_bet1}) can be derived as $d \le Mm$. The diversity order then directly follows as $d=Mm$. Roughly speaking, a Nakagami-$m$ fading channel can be regarded as a set of $m$ parallel independent Rayleigh fading channels. Hereby, for HARQ operating over Nakagami-$m$ fast fading channels, the maximum achievable diversity order equals to the number of independently faded paths that the transmit signal experiences, i.e., $d_{max} = Mm$ \cite{zheng2003diversity}.
%\textcolor[rgb]{1.00,0.00,0.00}{With $M$ HARQ transmissions over Nakagami-$m$ fast fading channels, the maximum achievable diversity order is $Mm$ [Ref]}.
Therefore, it can be concluded that under time-correlated fading channels with $0 \le \rho_{SD}^{k,l},\rho_{RD}^{k,l}  < 1$, a full diversity order of $Mm$ can be achieved by this cooperative HARQ-IR scheme. Notice that under quasi-static fading channels, i.e., ${\rho_{ab}^{k,l}}=1$, since no time diversity can be gained from HARQ retransmissions in one link, the diversity order reduces to $m$ when $M=1$ and $2m$ when $M>1$ \footnote{The factor {2} comes due to the exploration of spatial diversity from the source and the relay.}.

\section{Numerical Results and Discussions}\label{sec:sim}
The analytical results derived would facilitate performance evaluation and enable optimal design of cooperative HARQ-IR systems over time-correlated Nakagami-$m$ fading channels. For illustration, we take systems with parameters $2\Omega_{SD,l} = \Omega_{SR,l} = \Omega_{RD,l} =1$ and a constant correlation model \cite{alouini2001sum,aalo1995performance,chen2004distribution}, i.e., $\rho_{ab}^{l,k}=\rho$ for $1 \le l \ne k \le M$ as examples. Unless otherwise stated, the transmission rate and the transmission SNR are set as $\mathcal R=4 \rm bps/Hz$ and $\gamma_T = 10 \rm dB$.

\subsection{Outage Performance Evaluation}
As shown in Fig. \ref{fig_outage_pro}, the outage probability $P_{out}(M)$ is plotted for systems with fading order $m=6$ and two different time correlations, i.e., $\rho=0.1$, $0.6$. It is easily seen that the analytical results match well with simulation results. The increase of the number of transmissions $M$ significantly decreases the outage probability, which demonstrates the benefit of HARQ-IR protocol. Given the number of transmissions $M>1$, the curves of outage probability under two different correlations become parallel as the transmission SNR $\gamma_T$ becomes large. Noticing that the outage probability is plotted as logarithmic scale, it means that $\log P_{out}(M)$ decreases at the same speed with the increase of $\log \gamma_T$ no matter what the correlation is. Here for the case of $M=1$, the curves under two different correlations merge together since only one transmission is allowed and the relay plays no role in the transmission. Moreover, the curves become steeper with the increase of $M$. These results are consistent with our analysis in Section \ref{sec:div_ord}, that is, the diversity order of cooperative HARQ-IR systems is equal to $Mm$ which is irrelevant to the time correlation.
\begin{figure}
  \centering
  % Requires \usepackage{graphicx}
  \includegraphics[width=3in,height=2.4in]{./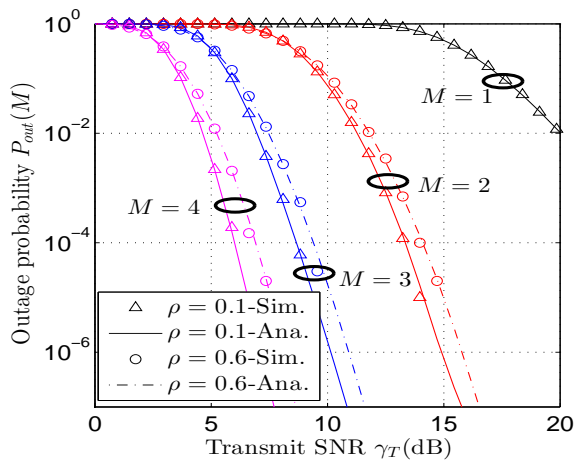}\\
  \caption{Verification of analytical results.}\label{fig_outage_pro}
\end{figure}

To further investigate the impact of time correlation on outage performance, Fig. \ref{fig:corr_out_ord} shows the outage probability $P_{out}(M)$ against time correlation $\rho$ for $M=3$. It is shown that the increase of time correlation would cause an outage performance degradation. For instance, the outage probability increases from $10^{-6}$ to $2*10^{-4}$ as $\rho$ increases from 0 to 1 given $m=6$. It therefore concludes that channel time correlation has a detrimental impact on outage performance.
\begin{figure}
  \centering
  % Requires \usepackage{graphicx}
  \includegraphics[width=3in,height=2.4in]{./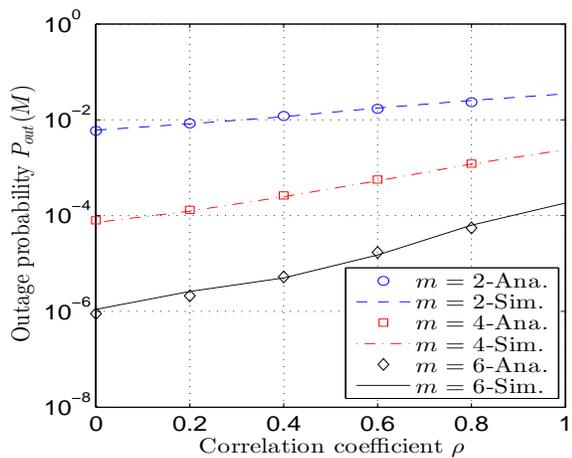}\\
  \caption{Impact of time correlation.
}\label{fig:corr_out_ord}
\end{figure}

Noticing that the fading order $m$ is an important parameter to characterize fading channels, the impact of fading order $m$ on outage performance is studied for $\rho=0.5$ in Fig. \ref{fig_outage_m}. Apparently, the increases of fading order would cause the decrease of outage probability. For example, given four transmissions $M=4$, the outage performance roughly achieves a 30dB gain when the fading order increases from 1 to 3. Thus we can conclude that the increase of fading order $m$ is beneficial to the outage performance, which has been particularly proved in Section \ref{sec:div_ord}, that is, the outage probability is directly proportional to ${\gamma_T}^{-Mm}$, i.e., $P_{out}(M) \propto {\gamma_T}^{-Mm}$.
\begin{figure}
  \centering
  % Requires \usepackage{graphicx}
  \includegraphics[width=3in,height=2.4in]{./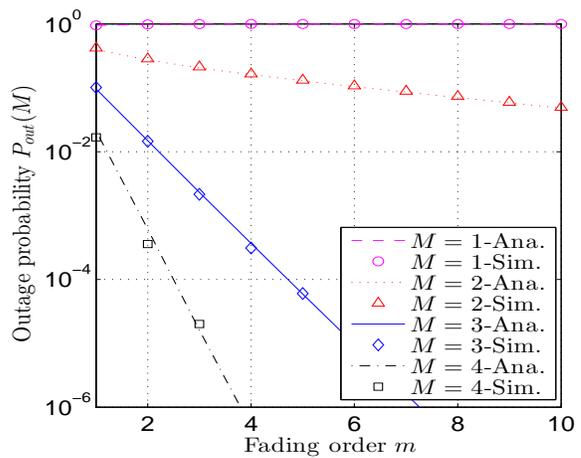}\\
  \caption{Impact of fading order.}\label{fig_outage_m}
\end{figure}

\subsection{Optimal Rate Selection}
Another widely concerned performance metric for HARQ-IR systems is long term average throughput (LTAT) and it is defined as \cite{caire2001throughput,zorzi1996use,zhao2005practical}
\begin{equation}\label{eqn_LTAT_def}
\bar{\mathcal  T}  = \frac{{{\cal R}\left( {1 - {P_{out}}\left( M \right)} \right)}}{{1 + \sum\nolimits_{l = 1}^{M - 1} {{P_{out}}\left( l \right)} }}.
\end{equation}
In practice, the HARQ-IR systems should usually be properly designed to achieve the maximum LTAT with guaranteed quality of service, e.g., a specifically low outage probability. Taking the design of the transmission rate as an example, the design problem can be formulated as
\begin{equation}\label{eqn_op}
\mathop {{\rm{max}}}\limits_{\cal R} \, \bar {\cal T}. \quad {\rm{s}}.{\rm{t}}.\, {P_{out}}\left( M \right) \le \vartheta,
\end{equation}
where $\vartheta$ specifies the outage constraint and denotes the maximum allowable outage probability. With our analytical results, the optimal rate and LTAT can be solved easily from  (\ref{eqn_op}) by using certain numerical tools. Given the maximum number of transmissions $M=4$, the optimal LTAT versus the outage constraint $\vartheta$ is shown in Fig. \ref{fig_avg_trans_rate_against_R1}. %Unlike the outage probability, there is no significant variation of the optimal LTAT under different $\rho$. Using certain numerical tools, the optimal LTAT $\bar {\mathcal T}_{opt}$ and the optimal transmission rate $\mathcal R_{opt}$ can be easily found as shown in Fig. \ref{fig_avg_trans_rate_against_R1}. For example, for the system with $\rho=0.6$ and $m=6$, the optimal transmission rate should be selected as $\mathcal R_{opt} = 7.4 \rm bps/Hz$ so that the maximum LTAT can be achieved as $\bar {\mathcal T}_{opt} = 2.17 \rm bps/Hz$.
It can be seen that the optimal LTAT $\bar {\mathcal T}_{opt}$ increases when the outage constraint $\vartheta$ is relaxed. However, no significant increase of $\bar {\mathcal T}_{opt}$ can be achieved when $\vartheta \ge 10^{-1}$.

%Moreover, since the outage probability ${P_{out}}\left( M \right) $ is an increasing function of $\mathcal R$, the outage constraint ${P_{out}}\left( M \right) \le \varepsilon$ in (\ref{eqn_op}) can be converted into $\mathcal R \le \bar {\mathcal R} $, where $\bar {\mathcal R}$ is the cut-off transmission rate. Using certain numerical tools or through exhaustively search in the region It is easily found that the optimization problem (\ref{eqn_op}) is non-convex and the optimal LTAT $\bar {\mathcal T}$ should be exhaustively searched in the region $\mathcal R \le \bar {\mathcal R} $. For example, for fixed values of $\rho=0.6$ and $m=6$, the optimal transmission rate should be selected as $\mathcal R = 7.4 \rm bps/Hz$.
%\begin{figure}
%  \centering
%  % Requires \usepackage{graphicx}
%  \includegraphics[width=5in]{./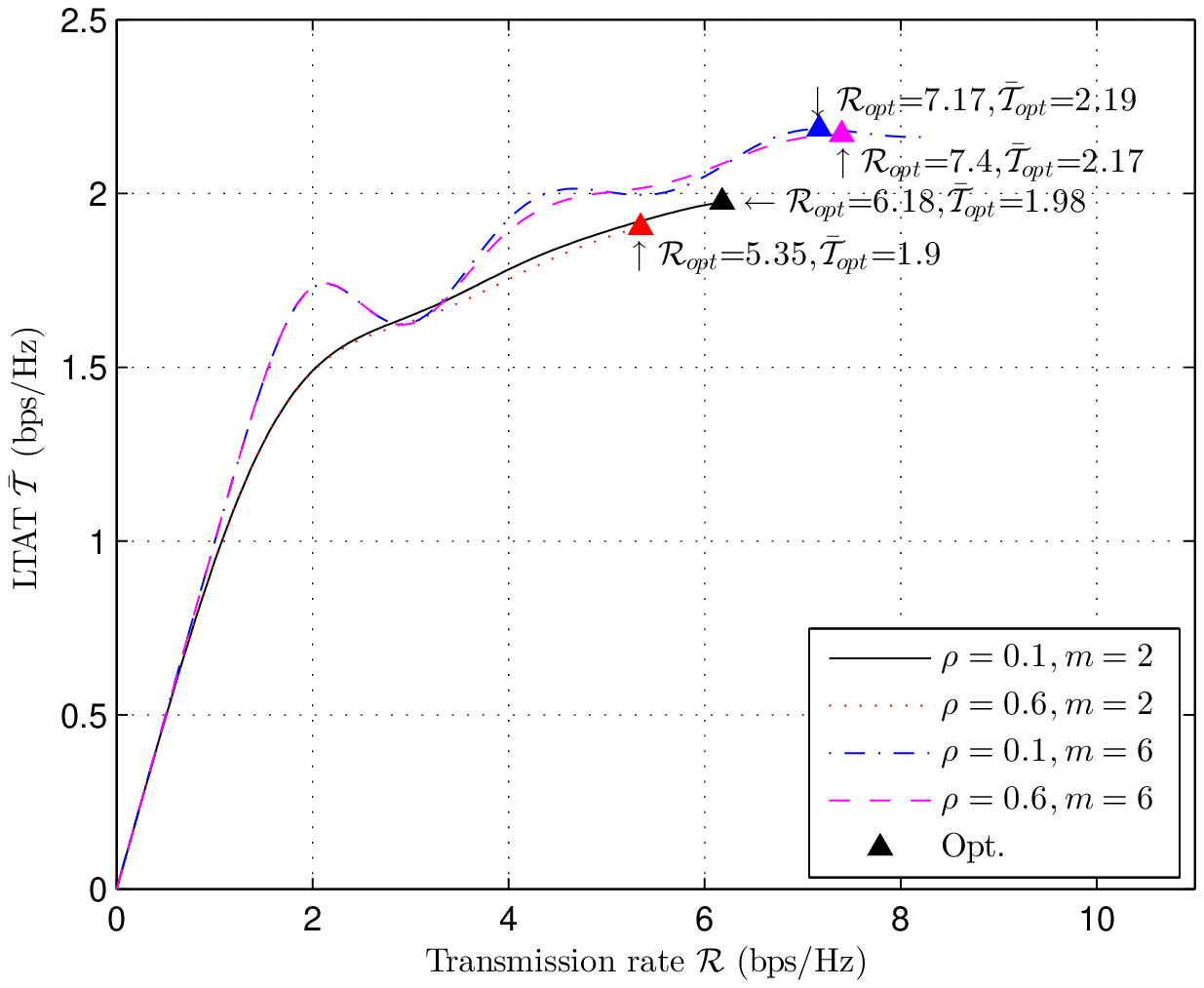}\\
%  \caption{LTAT $\bar{\mathcal T}$ against transmission rate $\mathcal R$ given outage constraint $\vartheta = 0.01$.}\label{fig_avg_trans_rate_against_R1_1}
%\end{figure}

\begin{figure}
  \centering
  % Requires \usepackage{graphicx}
  \includegraphics[width=3in,height=2.4in]{./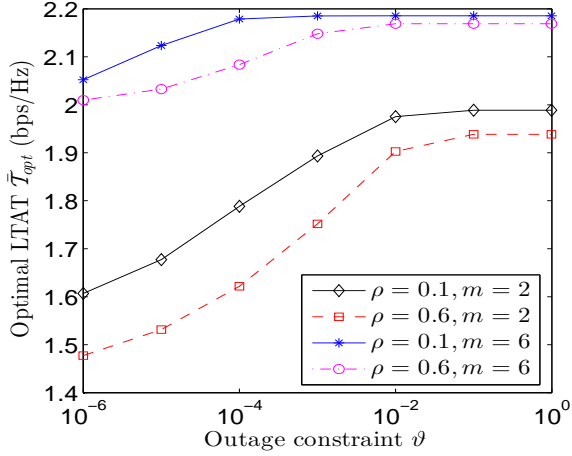}\\
  \caption{Optimal LTAT $\bar{\mathcal T}_{opt}$ against outage constraint $\vartheta$.}\label{fig_avg_trans_rate_against_R1}
\end{figure}
\section{Conclusions}
\label{sec_con}
In this paper, we have investigated the performance of cooperative HARQ-IR scheme operating over time-correlated Nakagami-$m$ fading channels. An efficient inverse moment matching method has been proposed to approximate the outage probability in closed-form as a weighted sum of multiple CDFs of Lognormal RVs. In addition, diversity order of cooperative HARQ-IR has been analyzed and it has been proved that full diversity can be achieved even under time correlated fading channels except quasi-static fading channels. The numerical results have demonstrated that high fading order and low time correlation are beneficial to the cooperative HARQ-IR scheme.

\appendices
\section{Nonexistence of MGFs corresponding to $Y_{K,r}^D$, $Y_K^D$, and $Y_r^R $}\label{app:proof_resu_1}
The nonexistence of MGFs with respect to $Y_{K,r}^D$, $Y_K^D$, and $Y_r^R $ can be proved by taking $Y_K^D$ as an example. Since $Y_K^D$ has finite moments of all order, its MGF can be written as \cite{billingsley2008probability}
\begin{equation}\label{eqn:mome_ge_fun}
\mathcal M_{Y_K^D}\left( s \right) = \sum\limits_{n = 0}^\infty  {\frac{{{{\beta _n}s^n}}}{{n!}}},
\end{equation}
where ${\beta _n}$ refers to the $n$th order moment of $Y_K^D$ given as
\begin{multline}\label{eqn:reg_mom_Y}
{\beta _n} = \int_{{\gamma _1} = 0}^\infty  { \cdots \int_{{\gamma _K} = 0}^\infty  {\prod\limits_{l = 1}^K {{{\left( {1 + {\gamma _l}} \right)}^n}} } }  \\
\times {f_{{\bf{\gamma }}_{SD}^{1:K}}}\left( {{\gamma _1}, \cdots ,{\gamma _K}} \right)d{\gamma _1} \cdots d{\gamma _K}.
\end{multline}
where ${{\boldsymbol{\gamma}}_{SD}^{1:K}} = (\gamma_{SD,1},\cdots,\gamma_{SD,K})$ with joint PDF denoted as ${f_{{{\boldsymbol{\gamma}}_{SD}^{1:K}}}}\left( {{\gamma_1}, \cdots ,{\gamma_K}} \right)$. By substituting (\ref{eqn:joint_gamma_l}) into (\ref{eqn:reg_mom_Y}) and making a change of variable ${z_l} = {{m\gamma_{SD,l}}}/{\left({\Omega {'_{SD,l}}\left( {1 - {\lambda _{SD,l}}^2} \right)}\right)}$, we have
\begin{multline}\label{eqn:beta_reg_mom_cal}
{\beta _n} = \int\nolimits_{t = 0}^\infty  {\frac{{{t^{m - 1}}}}{{{\Gamma ^{K + 1}}\left( m \right)}}} {{\rm{e}}^{ - \left( {1 + \sum\nolimits_{l = 1}^K {\frac{{{\lambda _{SD,l}}^2}}{{1 - {\lambda _{SD,l}}^2}}} } \right)t}} \\
 \times \prod\limits_{l = 1}^K {\int\nolimits_0^\infty  {z_l^{m - 1}} } {e^{ - {z_l}}}{\left( {1 + \frac{{\Omega {'_{SD,l}}\left( {1 - {\lambda _{SD,l}}^2} \right){z_l}}}{m}} \right)^n} \\
 \times {}_0{F_1}\left( {;m;\frac{{{\lambda _{SD,l}}^2t}}{{1 - {\lambda _{SD,l}}^2}}{z_l}} \right)d{z_l}dt.
\end{multline}
Since ${}_0F_1(;m;t) \ge 1$ for $t \ge 0$, the $n$th order moment ${\beta _n}$ is lower bounded by
\begin{multline}\label{eqn:reg_mom_fur_cal_ine}
{\beta _n} \ge \int\nolimits_{t = 0}^\infty  {\frac{{{t^{m - 1}}}}{{{\Gamma ^{K + 1}}\left( m \right)}}} {{\rm{e}}^{ - \left( {1 + \sum\limits_{l = 1}^K {\frac{{{\lambda _{SD,l}}^2}}{{1 - {\lambda _{SD,l}}^2}}} } \right)t}}dt  \\
\times \prod\limits_{l = 1}^K {{{\left( {\frac{{\Omega {'_{SD,l}}\left( {1 - {\lambda _{SD,l}}^2} \right)}}{m}} \right)}^n}\int\nolimits_0^\infty  {z_l^{m + n - 1}{e^{ - {z_l}}}d{z_l}} }.
\end{multline}
By using \cite[eq. 3.381.4]{gradshteyn1965table}, it follows that
\begin{multline}\label{eqn:reg_mom_fina_ineq}
{\beta _n} \ge {\left( {1 + \sum\limits_{l = 1}^K {\frac{{{\lambda _{SD,l}}^2}}{{1 - {\lambda _{SD,l}}^2}}} } \right)^{ - m}}\frac{{{\Gamma ^K}\left( {m + n} \right)}}{{{\Gamma ^K}\left( m \right)}}  \\
\times {\left( {\prod\limits_{l = 1}^K {\frac{{\Omega {'_{SD,l}}\left( {1 - {\lambda _{SD,l}}^2} \right)}}{m}} } \right)^n}.
\end{multline}
From (\ref{eqn:reg_mom_fina_ineq}), the following lower bound of ${{{\beta _n}{s^n}}}/{{n!}}$ holds
\begin{multline}\label{eqn:series_conve}
\frac{{{\beta _n}{s^n}}}{{n!}} \ge {\left( {1 + \sum\limits_{l = 1}^K {\frac{{{\lambda _{SD,l}}^2}}{{1 - {\lambda _{SD,l}}^2}}} } \right)^{ - m}}\frac{{{\Gamma ^K}\left( {m + n} \right)}}{{{\Gamma ^K}\left( m \right)n!}} \\
\times{\left( {s\prod\limits_{l = 1}^K {\frac{{\Omega {'_{SD,l}}\left( {1 - {\lambda _{SD,l}}^2} \right)}}{m}} } \right)^n} \triangleq {a_n},
\end{multline}
Since
\begin{align}\label{eqn:inftu_conv_ser}
\mathop {\lim }\limits_{n \to \infty } \frac{{{a_n}}}{{{a_{n - 1}}}} & = \mathop {\lim }\limits_{n \to \infty } \frac{{{{\left( {m + n - 1} \right)}^K}}}{n}s\prod\limits_{l = 1}^K {\frac{{\Omega {'_{SD,l}}\left( {1 - {\lambda _{SD,l}}^2} \right)}}{m}} \notag \\
& = \infty ,\,K > 1,
\end{align}
it is readily found that ${\beta_n s^n/n!} \ge a_n \to \infty$ as $n \to \infty$ if $K > 1$. Hereby, $\mathcal M_{Y_K^D}\left( s \right)$ in (\ref{eqn:mome_ge_fun}) diverges to infinity for any $s$. In other words, the MGF $\mathcal M_{Y_K^D}\left( s \right)$ does not exist. Similarly, we can prove that the MGFs of $Y_{K,r}^D$ and $Y_r^R $ do not exist either.

\section{Derivation of $\alpha_n$}
\label{app_inv_mom}
With the definition of ${Y_K^D \triangleq \prod\nolimits_{l = 1}^K {\left( {1 + {\gamma _{SD,l}}} \right)} }$, the $n$-th inverse moment of $Y_K^D$ can be written as
\begin{multline}\label{eqn_def_moments}
{\alpha_n}=
 \int\nolimits_{{\gamma_1} = 0}^\infty    \cdots \int\nolimits_{{\gamma_K} = 0}^\infty  \prod\limits_{l = 1}^K {{{\left( {1 + \gamma_l} \right)}^{ - n}}}  \\
 \times {f_{\bs\gamma_{SD}^{1:K}}}\left( {{\gamma_1}, \cdots ,{\gamma_K}} \right)d{\gamma_1} \cdots d{\gamma_K}   .
\end{multline}
Putting (\ref{eqn:joint_gamma_l}) into (\ref{eqn_def_moments}) and making a change of variable ${z_l} = {{m\gamma_{SD,l}}}/{\left({\Omega {'_{SD,l}}\left( {1 - {\lambda _{SD,l}}^2} \right)}\right)}$, it yields
\begin{multline}\label{eqn_def_moments_1}
{\alpha_n} = \int\nolimits_{t = 0}^\infty  {\frac{{{t^{m - 1}}}}{{{\Gamma ^{K + 1}}\left( m \right)}}} {{e}^{ - \left( {1 + \sum\limits_{l = 1}^K {\frac{{{\lambda _{SD,l}}^2}}{{1 - {\lambda _{SD,l}}^2}}} } \right)t}}\\
 \times \prod\limits_{l = 1}^K {\int\nolimits_0^\infty  {z_l^{m - 1}} } {e^{ - {z_l}}}{\left( {1 + \frac{{\Omega {'_{SD,l}}\left( {1 - {\lambda _{SD,l}}^2} \right){z_l}}}{m}} \right)^{ - n}} \\
 \times {}_0{F_1}\left( {;m;\frac{{{\lambda _{SD,l}}^2t}}{{1 - {\lambda _{SD,l}}^2}}{z_l}} \right)d{z_l}dt.
\end{multline}
By adopting Generalized Gaussian Quadrature \cite{rabinowitz1959tables,germund2008numerical}, the $n$-th inverse moment in (\ref{eqn_def_moments_1}) can be approximated as
\begin{multline}\label{eqn_def_moments_1_fur}
{\alpha_n}  \approx
  \int\nolimits_{t = 0}^\infty  {\frac{{{t^{m - 1}}}}{{{\Gamma ^{K + 1}}\left( m \right)}}} {{{e}}^{ - \left( {1 + \sum\limits_{k = 1}^K {\frac{{{\lambda _{SD,k}}^2}}{{1 - {\lambda _{SD,k}}^2}}} } \right)t}}\\
 \times \prod\limits_{l = 1}^K \sum\limits_{{p_l} = 1}^{{N_Q}} {{w_{{p_l}}}} {{\left( {1 + \frac{{\Omega {'_{SD,l}}\left( {1 - {\lambda _{SD,l}}^2} \right){\zeta _{{p_l}}}}}{m}} \right)}^{ - n}} \\
 \times {}_0{F_1}\left( {;m;\frac{{{\lambda _{SD,l}}^2}}{{1 - {\lambda _{SD,l}}^2}}{\zeta _{{p_l}}}t} \right) dt\\
 = \sum\limits_{{p_1}, \cdots ,{p_K} \in \left[ {1,{N_Q}} \right]} \frac{{\prod\limits_{l = 1}^K {{w_{{p_l}}}{{\left( {1 + \frac{{\Omega {'_{SD,l}}\left( {1 - {\lambda _{SD,l}}^2} \right){\zeta _{{p_l}}}}}{m}} \right)}^{ - n}}} }}{{{\Gamma ^{K + 1}}\left( m \right){{\left( {1 + \sum\limits_{l = 1}^K {\frac{{{\lambda _{SD,l}}^2}}{{1 - {\lambda _{SD,l}}^2}}} } \right)}^m}}} \\
 \times \int\nolimits_{t = 0}^\infty  {{t^{m - 1}}{e^{ - t}}\prod\limits_{l = 1}^K {{}_0{F_1}\left( {;m;{\varpi _l}{\zeta _{{p_l}}}t} \right)} dt} ,
\end{multline}
where $N_Q$ is the quadrature order, the weights $w_{p_l}$ and abscissas ${{\zeta _{p_l}}}$ for $N_Q$ up to $32$ are tabulated in \cite{rabinowitz1959tables}, and ${\varpi _l} = {{\frac{{{\lambda _{SD,l}}^2}}{{1 - {\lambda _{SD,l}}^2}}}({1 + \sum_{k = 1}^K {\frac{{{\lambda _{SD,k}}^2}}{{1 - {\lambda _{SD,k}}^2}}} }})^{-1}$. The approximation is valid for non-integer $m$ and can achieve a considerably high accuracy when $N_Q$ is sufficiently large \cite{rabinowitz1959tables,germund2008numerical}. Since the integral in (\ref{eqn_def_moments_1_fur}) can be derived as
\begin{align}\label{eqn_use_int}
&\int\nolimits_{t = 0}^\infty  {{t^{m - 1}}{e^{ - t}}\prod\limits_{l = 1}^K {_0{F_1}\left( {;m;{\varpi _l}{\zeta _{{p_l}}}t} \right)} } dt \notag \\
%& = \prod\limits_{l = 1}^K {\frac{{\Gamma \left( m \right)}}{{2\pi i}}\int\limits_{{\mathcal C_l}} {\frac{{\Gamma \left( {{s_l}} \right)}}{{\Gamma \left( {m - {s_l}} \right)}}{{\left( { - {\varpi _l}{\zeta _{{p_l}}}} \right)}^{ - {s_l}}}d{s_l}} } \int\limits_{t = 0}^\infty  {{t^{m - \sum\limits_{l = 1}^K {{s_l}}  - 1}}{e^{ - t}}} dt\notag\\
 & = {\left( {\frac{{\Gamma \left( m \right)}}{{2\pi i}}} \right)^K}\int\limits_{{{\cal C}_1}} { \cdots \int\limits_{{{\cal C}_K}} {\frac{{\Gamma \left( {m - \sum\limits_{l = 1}^K {{s_l}} } \right)\prod\limits_{l = 1}^K {\Gamma \left( {{s_l}} \right)} }}{{\prod\limits_{l = 1}^K {\Gamma \left( {m - {s_l}} \right){{\left( { - {\varpi _l}{\zeta _{{p_l}}}} \right)}^{{s_l}}}} }}} d{s_1} \cdots d{s_K}}  \notag\\
 &= {\Gamma ^K}\left( m \right)\Psi _2^{\left( K \right)}\left( {m;m, \cdots ,m;{\varpi _1}{\zeta _{{p_1}}}, \cdots ,{\varpi _K}{\zeta _{{p_K}}}} \right),
\end{align}
where $\Psi _2^{\left( K \right)}(;;)$ denotes the confluent form of Lauricella hypergeometric function \cite[Definition A.20]{mathai2009h} \cite{saigo1992some}, the $n$-th inverse moment (\ref{eqn_def_moments_1_fur}) is finally derived as (\ref{eqn_inv_mom_def}).

\section{Proof of Theorem \ref{the:inverse_mom_based1}}\label{app:inverse_mom_based}
It is clear from \cite{govindarajulu1962theory} that the $n$-th inverse moment of $Y_K^D$ is equivalent to the $n$-th moment of a RV $Z = 1/{Y_K^D}$, i.e.,
\begin{align}\label{eqn_inv_mom_exp}
\alpha_n &= \int\nolimits_{ 0 }^\infty  {{y^{ - n}}{f_{Y_K^D}}\left( y \right)dy}  = \int\nolimits_{ 0 }^\infty  {{z^n}{f_{Y_K^D}}\left( {\frac{1}{z}} \right)\frac{1}{{{z^2}}}dz} \notag \\
& = \int\nolimits_{ 0 }^\infty  {{z^n}{f_Z}\left( z \right)dz} ,
\end{align}
where $f_Z(z)$ is the PDF of $Z$ and satisfies that $f_Z(z)={{f_{Y_K^D}}\left( {{z^{-1}}} \right){{{z^{-2}}}}}$. With Property 1 and the result in \cite[pp. 176-177]{cramer1999mathematical}, the distribution of $Z$ can be uniquely determined by its moments $\alpha_n$. By using moment matching method, the PDF of $Z$ can be uniquely expressed as \cite{provost2005moment,provost2012orthogonal}
\begin{equation}\label{eqn_density_mom_match}
{f_Z}(z) = {f_a}(z)\sum\limits_{l = 0}^\infty {{\xi _l}{z^l}}.
\end{equation}
where ${f_a}(z)$ is a nontrivial base density function with moments ${\nu_{k}} = \int\nolimits_0^\infty  {{z^{k}}{f_a}\left( z \right)dz} $ existing, and $\xi _0,\xi _1\cdots$ denote the coefficients of the polynomial of $z$.
%can be explicitly determined by matching all the moments, such that
%\begin{equation}\label{eqn_match_mom}
%\alpha_n  = \int\limits_{ 0 }^\infty  {z^n {f_a}(z)\sum\limits_{l = 0}^\infty {{\xi _l}{z^l}}dz} ,\quad n = 0,1,\cdots.
%\end{equation}
Since $Y_K^D = 1/Z$, it follows from (\ref{eqn_density_mom_match}) that
\begin{align}\label{eqn:Y_trna_Z_app}
{f_{Y_K^D}}\left( y \right) &= {y^{ - 2}}{f_Z}\left( {{y^{ - 1}}} \right) = {y^{ - 2}}{f_a}\left( {{y^{ - 1}}} \right)\sum\limits_{l = 0}^\infty  {\xi _l{y^{ - l}}} \notag \\
 &= {f_b}\left( y \right)\sum\limits_{l = 0}^\infty  {\xi _l{y^{ - l}}},
\end{align}
where ${f_b}\left( y \right) \triangleq {y^{ - 2}}{f_a}\left( {{y^{ - 1}}} \right)$ and in fact is the inverse distribution with respect to ${f_a}\left( {{z}} \right)$. Therefore, ${f_b}\left( y \right)$ is a nontrivial function of $y$ and can be regarded as a base density function with inverse moments existing as $\int\nolimits_0^\infty  {{y^{-k}}{f_b}\left( y \right)dy}=\int\nolimits_0^\infty  {{z^{k}}{f_a}\left( z \right)dz}={\nu_{k}}$. According to Lemma \ref{lem:serie_con_unique}, since the PDF ${f_{Y_K^D}}\left( y \right)$ can be uniquely determined by matching all the inverse moments, the coefficients $\xi _0,\xi _1\cdots$ should satisfy
\begin{equation}\label{eqn:xi_app}
{\alpha _n} = \int\nolimits_0^\infty  {{y^{ - n}}{f_b}\left( y \right)\sum\limits_{l = 0}^\infty  {{\xi _l}{y^{ - l}}} dy}  = \sum\limits_{l = 0}^\infty  {{\xi _l}{\nu_{n + l}}} ,\, n = 0,1, \cdots.
\end{equation}

\section{Proof of Theorem \ref{theorem:recur_coeff}}\label{app:proof_theorem1}
In general, ${\bs\xi _N}$ can be written as
\begin{equation}\label{eqn:coeff_N_plus_1_up1}
{\bs\xi _N} = \left[ {\begin{array}{*{20}{c}}
{{\bs\xi _{N - 1}}}\\
0
\end{array}} \right] + {{\bf{e}}_N},
\end{equation}
where ${{\bf{e}}_N}$ characterizes the convergence of the coefficients $\bs\xi _l$ when the approximation degree $l$ is increased from $N-1$ to $N$. With (\ref{eqn:direc_inverse_matrix}) and (\ref{eqn:coeff_N_plus_1_up1}), ${{\bf{e}}_N}$ can be obtained as
\begin{equation}\label{eqn:coeff_N_plus_1_up1en}
{{\bf{e}}_N} = {{\bs{\xi }}_N} - \left[ {\begin{array}{*{20}{c}}
{{{\bs{\xi }}_{N - 1}}}\\
0
\end{array}} \right] = {{\bf{A}}_N}^{ - 1}\left( {{{\bs{\alpha }}_N} - {{\bf{A}}_N}\left[ {\begin{array}{*{20}{c}}
{{{\bs{\xi }}_{N - 1}}}\\
0
\end{array}} \right]} \right),
\end{equation}
where the second equality holds due to the invertibility of ${\bf A}_N$. From the definition in (\ref{eqn:matrix_U_mom_b}), ${{\bf{A}}_N}$ can be rewritten as
\begin{equation}\label{eqn:A_n_block}
{{\bf{A}}_N} = \left[ {\begin{array}{*{20}{c}}
{{{\bf{A}}_{N - 1}}}&{{{\bf{v}}_{N - 1}}}\\
{{{\bf{v}}_{N - 1}}^T}&{{\nu _{2N}}}
\end{array}} \right],
\end{equation}
where ${{\bf {v}}_N} = {\left[ {\begin{array}{*{20}{c}}
{{\nu _{N+1}}}&{{\nu _{N + 2}}}& \cdots &{{\nu _{2N + 1}}}
\end{array}} \right]^{\rm{T}}}$. By putting (\ref{eqn:A_n_block}) into (\ref{eqn:coeff_N_plus_1_up1en}) and using (\ref{eqn:coeff_matrx_form}), ${{\bf{e}}_N}$ can be further derived as
\begin{align}\label{eqn:delta_coeff_matr}
{{\bf{e}}_N} &= {{\bf{A}}_N}^{ - 1}\left( {{{\bs{\alpha }}_N} - \left[ {\begin{array}{*{20}{c}}
{{{\bf{A}}_{N - 1}}{{\bs{\xi }}_{N - 1}}}\\
{{{\bf{v}}_{N - 1}}^T{{\bs{\xi }}_{N - 1}}}
\end{array}} \right]} \right) \notag \\
%&= {{\bf{A}}_N}^{ - 1}\left( {\left[ {\begin{array}{*{20}{c}}
%{{{\bs{\alpha }}_{N - 1}}}\\
%{{\alpha _N}}
%\end{array}} \right] - \left[ {\begin{array}{*{20}{c}}
%{{{\bs{\alpha }}_{N - 1}}}\\
%{{{\bf{v}}_{N - 1}}^T{{\bf{A}}_{N - 1}}^{ - 1}{{\bs{\alpha }}_{N - 1}}}
%\end{array}} \right]} \right) \notag\\
 &= {{\bf{A}}_N}^{ - 1}\left[ {\begin{array}{*{20}{c}}
{\bf 0}_N\\
{{\alpha _N} - {{\bf{v}}_{N - 1}}^T{{\bf{A}}_{N - 1}}^{ - 1}{{\bs{\alpha }}_{N - 1}}}
\end{array}} \right],
\end{align}
where ${\bf 0}_{N}$ represents a null vector with length $N$.

By applying the inverse of a partitioned matrix \cite[5.16.b]{abadir2005matrix} on (\ref{eqn:A_n_block}), it yields
\begin{multline}\label{eqn:part_inver_matrix}
{{\bf{A}}_N}^{ - 1} =\\
 \left[ {\begin{array}{*{20}{c}}
{{{\left( {{{\bf{A}}_{N - 1}} - {{\bf{v}}_{N - 1}}{\nu _{2N}}^{ - 1}{{\bf{v}}_{N - 1}}^{\rm{T}}} \right)}^{ - 1}}}\\
{ - {{\left( {{\nu _{2N}} - {{\bf{v}}_{N - 1}}^{\rm{T}}{{\bf{A}}_{N - 1}}^{ - 1}{{\bf{v}}_{N - 1}}} \right)}^{ - 1}}{{\bf{v}}_{N - 1}}^{\rm{T}}{{\bf{A}}_{N - 1}}^{ - 1}}
\end{array}} \right.\\
\left. {\begin{array}{*{20}{c}}
{ - {{\left( {{{\bf{A}}_{N - 1}} - {{\bf{v}}_{N - 1}}{\nu _{2N}}^{ - 1}{{\bf{v}}_{N - 1}}^{\rm{T}}} \right)}^{ - 1}}{{\bf{v}}_{N - 1}}{\nu _{2N}}^{ - 1}}\\
{{{\left( {{\nu _{2N}} - {{\bf{v}}_{N - 1}}^{\rm{T}}{{\bf{A}}_{N - 1}}^{ - 1}{{\bf{v}}_{N - 1}}} \right)}^{ - 1}}}
\end{array}} \right].
\end{multline}
Using the matrix inversion lemma \cite[5.17]{abadir2005matrix}, we have
\begin{multline}\label{eqn:matrix_inverse_lemma}
{\left( {{{\bf{A}}_{N - 1}} - {{\bf{v}}_{N - 1}}{\nu _{2N}}^{ - 1}{{\bf{v}}_{N - 1}}^{\rm{T}}} \right)^{ - 1}} = \\
{{\bf{A}}_{N - 1}}^{ - 1} + \frac{{{{\bf{A}}_{N - 1}}^{ - 1}{{\bf{v}}_{N - 1}}{{\bf{v}}_{N - 1}}^{\rm{T}}{{\bf{A}}_{N - 1}}^{ - 1}}}{{{\nu _{2N}} - {{\bf{v}}_{N - 1}}^{\rm{T}}{{\bf{A}}_{N - 1}}^{ - 1}{{\bf{v}}_{N - 1}}}}.
\end{multline}
Then putting (\ref{eqn:matrix_inverse_lemma}) into (\ref{eqn:part_inver_matrix}) and after some manipulations, (\ref{eqn:part_inver_matrix}) can be eventually transformed into (\ref{eqn:mat_A_iter}).

Plugging (\ref{eqn:mat_A_iter}) into (\ref{eqn:delta_coeff_matr}), it produces
\begin{multline}\label{eqn:delta_xi_matr_fr}
{{\bf{e}}_N} = \frac{{{\alpha _N} - {{\bf{v}}_{N - 1}}^{\rm{T}}{{\bf{A}}_{N - 1}}^{ - 1}{\bs \alpha _{N - 1}}}}{{\sqrt {{\nu _{2N}} - {{\bf{v}}_{N - 1}}^{\rm{T}}{{\bf{A}}_{N - 1}}^{ - 1}{{\bf{v}}_{N - 1}}} }} \times \\
\underbrace {{\rm{ }}\left[ {\begin{array}{*{20}{c}}
{\frac{{ - {{\bf{A}}_{N - 1}}^{ - 1}{{\bf{v}}_{N - 1}}}}{{\sqrt {{\nu _{2N}} - {{\bf{v}}_{N - 1}}^{\rm{T}}{{\bf{A}}_{N - 1}}^{ - 1}{{\bf{v}}_{N - 1}}} }}}\\
{\frac{1}{{\sqrt {{\nu _{2N}} - {{\bf{v}}_{N - 1}}^{\rm{T}}{{\bf{A}}_{N - 1}}^{ - 1}{{\bf{v}}_{N - 1}}} }}}
\end{array}} \right]}_{{{\bf{c}}_N}}
%
%\left[ {\begin{array}{*{20}{c}}
%{ - {{\left( {{{\bf{A}}_{N - 1}} - {{\bf{v}}_{N - 1}}{\nu _{2N}}^{ - 1}{{\bf{v}}_{N - 1}}^{\rm{T}}} \right)}^{ - 1}}{{\bf{v}}_{N - 1}}{\nu _{2N}}^{ - 1}}\\
%{{{\left( {{\nu _{2N}} - {{\bf{v}}_{N - 1}}^{\rm{T}}{{\bf{A}}_{N - 1}}^{ - 1}{{\bf{v}}_{N - 1}}} \right)}^{ - 1}}}
%\end{array}} \right].
\end{multline}
 and (\ref{eqn:delta_xi_matr_fr}) can be further rewritten as
\begin{align}\label{eqn:delta_xi_final_exp}
&{{\bf{e}}_N}
%&\begin{array}{r}
%\underbrace {\left[ {\begin{array}{*{20}{c}}
%{\frac{{ - {{\bf{v}}_{N - 1}}^{\rm{T}}{{\bf{A}}_{N - 1}}^{ - 1}}}{{\sqrt {{\nu _{2N}} - {{\bf{v}}_{N - 1}}^{\rm{T}}{{\bf{A}}_{N - 1}}^{ - 1}{{\bf{v}}_{N - 1}}} }}}&{\frac{1}{{\sqrt {{\nu _{2N}} - {{\bf{v}}_{N - 1}}^{\rm{T}}{{\bf{A}}_{N - 1}}^{ - 1}{{\bf{v}}_{N - 1}}} }}}
%\end{array}} \right]}_{{{\bf{c}}_N}^{\rm{T}}}\\
%\times {\bs \alpha _N}\underbrace {\left[ {\begin{array}{*{20}{c}}
%{\frac{{ - {{\bf{A}}_{N - 1}}^{ - 1}{{\bf{v}}_{N - 1}}}}{{\sqrt {{\nu _{2N}} - {{\bf{v}}_{N - 1}}^{\rm{T}}{{\bf{A}}_{N - 1}}^{ - 1}{{\bf{v}}_{N - 1}}} }}}\\
%{\frac{1}{{\sqrt {{\nu _{2N}} - {{\bf{v}}_{N - 1}}^{\rm{T}}{{\bf{A}}_{N - 1}}^{ - 1}{{\bf{v}}_{N - 1}}} }}}
%\end{array}} \right]}_{{{\bf{c}}_N}}
%\end{array}\notag\\
 = {{\bf c}_N}^{\rm T} {{\bs{\alpha }}_N} {{\bf c}_N} = {\eta _N}{{\bf c}_N},
\end{align}
where $\eta _N = {{{\bf{c}}_N}^{\rm{T}}{{\bs{\alpha }}_N}} $, ${\bf c}_0=[1]$ and $\eta _0=1$. Substituting (\ref{eqn:delta_xi_final_exp}) into (\ref{eqn:coeff_N_plus_1_up1}), it follows that
\begin{multline}\label{eqn:eta_N_fina_rec}
{\bs \xi _N} = \left[ {\begin{array}{*{20}{c}}
{{\bs \xi _{N - 1}}}\\
0
\end{array}} \right] + { \eta _N}{{\bf{c}}_N} = \left[ {\begin{array}{*{20}{c}}
{{\bs \xi _{N - {\rm{2}}}}}\\
{{{\bf{0}}_{\rm{2}}}}
\end{array}} \right] \\
 +\left[ {\begin{array}{*{20}{c}}
{{ \eta _{N - 1}}{{\bf{c}}_{N - 1}}}\\
{\rm{0}}
\end{array}} \right]  + { \eta _N}{{\bf{c}}_N}
=\cdots = \sum\limits_{l = 0}^N {\left[ {\begin{array}{*{20}{c}}
{{ \eta _l}{{\bf{c}}_l}}\\
{{{\bf{0}}_{N - l}}}
\end{array}} \right]}.
\end{multline}
The proof then completes by substituting (\ref{eqn:eta_N_fina_rec}) into (\ref{eqn:ivner_bsed_fun_trun}).

\section{Proof of Remark \ref{rem:orth}}\label{app:proof_remark1}
With (\ref{eqn:delta_xi_final_exp}), $\left\langle {{{\bf{c}}_l}^{\rm{T}}{{\bf{y}}_l},{{\bf{c}}_k}^{\rm{T}}{{\bf{y}}_k}} \right\rangle $ can be written as
\begin{align}\label{eqn:inn_prod_rew}
&\left\langle {{{\bf{c}}_l}^{\rm{T}}{{\bf{y}}_l},{{\bf{c}}_k}^{\rm{T}}{{\bf{y}}_k}} \right\rangle \notag \\
&= \int_{ - \infty }^\infty  {{f_b}(y){{\bf{c}}_l}^{\rm{T}}{{\bf{y}}_l}{{\bf{y}}_k}^{\rm T}{{\bf{c}}_k}dy} = {\eta_l}^{-1}{\eta_k}^{-1}{{\bf{e}}_l}^{\rm{T}}{{\bf{A}}_{l,k}}{{\bf{e}}_k} \notag\\
&= {\eta_l}^{-1}{\eta_k}^{-1}\left[ {\begin{array}{*{20}{c}}
{\bf{0}}&{{\alpha _l} - {{\bf{v}}_{l - 1}}^{\rm T}{{\bf{A}}_{l - 1}}^{ - 1}{{\bs{\alpha }}_{l - 1}}}
\end{array}} \right] \notag\\
&\times {{\bf{A}}_l}^{ - 1}{{\bf{A}}_{l,k}}{{\bf{A}}_k}^{ - 1} \left[ {\begin{array}{*{20}{c}}
{\bf{0}}\\
{{\alpha _k} - {{\bf{v}}_{k - 1}}^{\rm T}{{\bf{A}}_{k - 1}}^{ - 1}{{\bs{\alpha }}_{k - 1}}}
\end{array}} \right],
\end{align}
where
\begin{equation}\label{eqn:def_A_lk}
{{\bf{A}}_{l,k}} = \left[ {\begin{array}{*{20}{c}}
{{\nu _0}}&{{\nu _1}}& \cdots &{{\nu _k}}\\
{{\nu _1}}&{{\nu _2}}& \cdots &{{\nu _{k + 1}}}\\
 \vdots & \vdots & \ddots & \vdots \\
{{\nu _l}}&{{\nu _{l + 1}}}& \cdots &{{\nu _{l + k}}}
\end{array}} \right],
\end{equation}
and the last step holds by using (\ref{eqn:delta_coeff_matr}).

For the case with $l=k$, ${{\bf{A}}_{l,k}} = {\bf A}_l$. It follows from (\ref{eqn:mat_A_iter}) and (\ref{eqn:inn_prod_rew}) that
\begin{align}\label{eqn:inner_pro_l_eql_k}
\left\langle {{{\bf{c}}_l}^{\rm{T}}{{\bf{y}}_l},{{\bf{c}}_l}^{\rm{T}}{{\bf{y}}_l}} \right\rangle   &= {\eta_l}^{-2}\frac{{{{\left( {{\alpha _l} - {{\bf{v}}_{l - 1}}^{\rm T}{{\bf{A}}_{l - 1}}^{ - 1}{{\bs{\alpha }}_{l - 1}}} \right)}^2}}}{{{\nu _{2l}} - {{\bf{v}}_{l - 1}}^{\rm T}{{\bf{A}}_{l - 1}}^{ - 1}{{\bf{v}}_{l - 1}}}} \notag\\
 &= {\eta_l}^{-2} {\left| {{{\bf{c}}_l}^{\rm{T}}{{\bs{\alpha }}_l}} \right|^2} = 1.
\end{align}
On the other hand, for the case with $l \ne k$, suppose that $l > k$ without loss of generality and then ${{\bf{A}}_l}^{ - 1}{{\bf{A}}_{l,k}}{{\bf{A}}_k}^{ - 1}$ can be written as
\begin{align}\label{eqn:inne_pro_der_matr_m}
&{{\bf{A}}_l}^{ - 1}{{\bf{A}}_{l,k}}{{\bf{A}}_k}^{ - 1} = \frac{1}{{\det \left( {{{\bf{A}}_l}} \right)\det \left( {{{\bf{A}}_k}} \right)}} \times \notag\\
&\begin{array}{r}
\left[ {\begin{array}{*{20}{c}}
{{\bf{A}}_l^{0,0}}&{{\bf{A}}_l^{1,0}}& \cdots &{{\bf{A}}_l^{l,0}}\\
{{\bf{A}}_l^{0,1}}&{{\bf{A}}_l^{1,1}}& \cdots &{{\bf{A}}_l^{l,1}}\\
 \vdots & \vdots & \ddots & \vdots \\
{{\bf{A}}_l^{0,l}}&{{\bf{A}}_l^{1,l}}& \cdots &{{\bf{A}}_l^{l,l}}
\end{array}} \right]\left[ {\begin{array}{*{20}{c}}
{{\nu _0}}&{{\nu _1}}& \cdots &{{\nu _k}}\\
{{\nu _1}}&{{\nu _2}}& \cdots &{{\nu _{k + 1}}}\\
 \vdots & \vdots & \ddots & \vdots \\
{{\nu _l}}&{{\nu _{l + 1}}}& \cdots &{{\nu _{l + k}}}
\end{array}} \right]\\
\times \left[ {\begin{array}{*{20}{c}}
{{\bf{A}}_k^{0,0}}&{{\bf{A}}_k^{1,0}}& \cdots &{{\bf{A}}_k^{k,0}}\\
{{\bf{A}}_k^{0,1}}&{{\bf{A}}_k^{1,1}}& \cdots &{{\bf{A}}_k^{k,1}}\\
 \vdots & \vdots & \ddots & \vdots \\
{{\bf{A}}_k^{0,k}}&{{\bf{A}}_k^{1,k}}& \cdots &{{\bf{A}}_k^{k,k}}
\end{array}} \right]
\end{array},
\end{align}
where ${{\bf{A}}_l^{i,j}}$ denotes the cofactor of the entry in the $i$-th row and $j$-th column of ${\bf A}_l$. Applying cofactor expansion of determinants into (\ref{eqn:inne_pro_der_matr_m}) yields
\begin{equation}\label{eqn:inne_pro_der_matr_m_fina}
\begin{array}{l}
{{\bf{A}}_l}^{ - 1}{{\bf{A}}_{l,k}}{{\bf{A}}_k}^{ - 1} = \frac{1}{{\det \left( {{{\bf{A}}_k}} \right)}}\left[ {\begin{array}{*{20}{c}}
{{\bf{A}}_k^{0,0}}&{{\bf{A}}_k^{1,0}}& \cdots &{{\bf{A}}_k^{k,0}}\\
 \vdots & \vdots & \ddots & \vdots \\
{{\bf{A}}_k^{0,k}}&{{\bf{A}}_k^{1,k}}& \cdots &{{\bf{A}}_k^{k,k}}\\
0&0&0&0\\
 \vdots & \vdots & \vdots & \vdots \\
0&0&0&0
\end{array}} \right]
\end{array}.
\end{equation}
Plugging (\ref{eqn:inne_pro_der_matr_m_fina}) into (\ref{eqn:inn_prod_rew}), we have $\left\langle {{{\bf{c}}_l}^{\rm{T}}{{\bf{y}}_l},{{\bf{c}}_k}^{\rm{T}}{{\bf{y}}_k}} \right\rangle = 0$ for $l \ne k$. Then the remark is proved.
%\begin{equation}\label{eqn:orth_k_large_k}
%\left[ {\begin{array}{*{20}{c}}
%{\bf{0}}&{{\alpha _l} - {{\bf{v}}_{l - 1}}^{\rm T}{{\bf{A}}_{l - 1}}^{ - 1}{{\bs{\alpha }}_{l - 1}}}
%\end{array}} \right]{{\bf{A}}_l}^{ - 1}{{\bf{A}}_{l,k}}{{\bf{A}}_k}^{ - 1}\left[ {\begin{array}{*{20}{c}}
%{\bf{0}}\\
%{{\alpha _k} - {{\bf{v}}_{k - 1}}^{\rm T}{{\bf{A}}_{k - 1}}^{ - 1}{{\bs{\alpha }}_{k - 1}}}
%\end{array}} \right] = 0.
%\end{equation}

\section{Derivation of $\mu$ and $\sigma ^2$}
\label{app_m_sig}
\subsection{Mean $\mu$}
The mean of RV $\ln(Y_K^D)$ is expressed as
\begin{align}\label{eqn_mean_def}
\mu  &= {\rm E}\left(\ln Y_K^D\right)=\int\nolimits_0^\infty  {\ln \left( y \right){f_{Y_K^D}}\left( y \right)dy} \notag \\
 &= \sum\limits_{l = 1}^K {\int\nolimits_{0}^\infty  {\ln \left( {1 + \gamma_l} \right){f_{{\gamma_{SD,l}}}}\left( {{\gamma_l}} \right)d{\gamma_l}}
 = \sum\limits_{l = 1}^K {{\mu _l}} },
\end{align}
where $\mu _l$ defines the expectation of $\ln(1+\gamma_l)$. Since $ \gamma_{SD,l} \sim \mathcal G(m,\Omega'_{SD,l}/m)$, $\mu_l$ is given by
\begin{multline}\label{eqn_mu_fur}
{\mu _l} = \frac{{{m^m}}}{{{{\left( {\Omega {'_{SD,l}}} \right)}^m}\Gamma \left( m \right)}} \\
\times {\int_0^\infty  {\ln \left( {1 + t} \right){t^{m - 1}}\exp \left( { - \frac{m}{{\Omega {'_{SD,l}}}}t} \right)dt} }.
\end{multline}
Applying Parseval equality of Meijer G-function \cite[Eq. 8.3.21]{debnath2010integral} into (\ref{eqn_mu_fur}) produces
\begin{align}\label{eqn_inte_f_text_1}
{\mu _l} &=  \frac{{{m^m}}}{{{{\left( {\Omega {'_{SD,l}}} \right)}^m}\Gamma \left( m \right)}} \int_0^\infty  {{t^{m - 1}}G_{2,2}^{1,2}\left( {\left. {_{1,0}^{1,1}} \right|t} \right)}\notag \\
&\times G_{0,1}^{1,0}\left( {\left. {_0^ - } \right|\frac{m}{{\Omega {'_{SD,l}}}}t} \right)dt
 %\int_0^\infty  {{t^{m - 1}}G_{2,2}^{1,2}\left( {\left. {_{1,0}^{1,1}} \right|t} \right)G_{0,1}^{1,0}\left( {\left. {_0^ - } \right|\frac{m}{{\Omega {'_{SD,l}}}}t} \right)dt} \notag\\
=\frac{1}{{\Gamma \left( m \right)}}G_{2,3}^{3,1}\left( {\left. {_{0,0,m}^{0,1}} \right|\frac{m}{{\Omega {'_{SD,l}}}}} \right).
\end{align}
Then by substituting (\ref{eqn_inte_f_text_1}) into (\ref{eqn_mean_def}), the mean $\mu$ is eventually derived as (\ref{eqn_miu}).
\subsection{Variance $\sigma ^2$}
According to the definition of variance of RV $\ln(Y_K^D)$, $\sigma ^2$ can be expressed as
\begin{align}\label{eqn_def_sig}
  {\sigma ^2} &= \sum\limits_{l = 1}^K {{\rm{Var}}\left[ {\ln \left( {1 + \gamma_{SD,l}} \right)} \right]}  + \notag \\
  &2\sum\limits_{1 \le i < j \le K} {{\rm Cov}\left( {\ln \left( {1 + \gamma_{SD,i}} \right),\ln \left( {1 + \gamma_{SD,j}} \right)} \right)},
\end{align}
where ${\mathop{\rm Var}} \left( {\ln \left( {1 + \gamma_{SD,l}} \right)} \right) = {\rm E}\left( {{{\ln }^2}\left( {1 + \gamma_{SD,l}} \right)} \right) - {\mu _l}^2$ and ${\rm Cov}\left( {\ln \left( {1 + \gamma_{SD,i}} \right),\ln \left( {1 + \gamma_{SD,j}} \right)} \right) ={\rm E}\left( {\ln \left( {1 + \gamma_{SD,i}} \right)\ln \left( {1 + \gamma_{SD,j}} \right)} \right) - {\mu _i}{\mu _j}$.
%\begin{equation}\label{eqn_var}
%  {\mathop{\rm Var}} \left( {\ln \left( {1 + \gamma_{SD,l}} \right)} \right) ={\rm E}\left( {{{\ln }^2}\left( {1 + \gamma_{SD,l}} \right)} \right) - {\mu _l}^2,
%\end{equation}
%\begin{multline}\label{eqn_cov}
%  {\rm Cov}\left( {\ln \left( {1 + \gamma_{SD,i}} \right),\ln \left( {1 + \gamma_{SD,j}} \right)} \right) \\
%  ={\rm E}\left( {\ln \left( {1 + \gamma_{SD,i}} \right)\ln \left( {1 + \gamma_{SD,j}} \right)} \right) - {\mu _i}{\mu _j}.
%\end{multline}

By making a change of variable $t={m}\gamma_{SD,l}/{\Omega'_{SD,l}}$, ${\rm E}\left( {{{\ln }^2}\left( {1 + \gamma_{SD,l}} \right)} \right)$ is given as
\begin{multline}\label{eqn_exp_log_xl}
{\rm E} \left(\ln^{2}\left(1+\gamma_{SD,l}\right)\right)=\frac{1}{{\Gamma \left( m \right)}}\exp \left( {\frac{m}{{{{\Omega '}_{SD,l}}}}} \right) \\
\times \int\nolimits_0^\infty  {{t^{m - 1}}} {e^{ - t}}{\ln ^2}\left( {1 + \frac{{{{\Omega '}_{SD,l}}}}{m}t} \right)dt.
\end{multline}
Using generalized Gaussian quadrature, it can be computed as
{\begin{multline}\label{eqn_exp_log_x1_fina_quad}
{\rm E}\left( {{{\ln }^2}\left( {1 + \gamma_{SD,l}} \right)} \right) \approx \\
\frac{1}{{\Gamma \left( m \right)}}{e^{\frac{m}{{\Omega {'_{SD,l}}}}}} \mathop \sum \limits_{p = 1}^{{N_Q}} {w_p}{\ln ^2}\left( {1 + \frac{{\Omega {'_{SD,l}}}}{m}{\zeta _p}} \right).
\end{multline}}
On the other hand, with (\ref{eqn:joint_gamma_l}), ${\rm E}\left( {\ln \left( {1 + \gamma_{SD,i}} \right)\ln \left( {1 + \gamma_{SD,j}} \right)} \right)$ can be written as
\begin{multline}\label{eqn_exp_log_log}
{\rm{E}}\left( {\ln \left( {1 + \gamma_{SD,i}} \right)\ln \left( {1 + \gamma_{SD,j}} \right)} \right) \\
  = \frac{1}{{{\Gamma ^3}\left( m \right)}}\int\nolimits_{t = 0}^\infty  {{t^{m - 1}}{e^{ - \left( {1 + \sum\limits_{l = i,j} {\frac{{{\lambda _{SD,l}}^2}}{{1 - {\lambda _{SD,l}}^2}}} } \right)t}}} \\
\times \prod\limits_{l = i,j} \int_0^\infty  {\begin{array}{*{20}{l}}
{y_l^{m - 1}{e^{ - {y_l}}}{}_0{F_1}\left( {;m;\frac{{{\lambda _{SD,l}}^2}}{{1 - {\lambda _{SD,l}}^2}}{y_l}t} \right)}
\end{array}} \\
\times \ln \left( {1 + \frac{{\Omega {'_{SD,l}}\left( {1 - {\lambda _{SD,l}}^2} \right)}}{m}{y_l}} \right)d{y_l} dt.
\end{multline}
Similar to (\ref{eqn_def_moments_1_fur}), (\ref{eqn_exp_log_log}) can be further derived by using the generalized Gaussian quadrature as
\begin{multline}\label{eqn_exp_log_log_fur}
{\rm E}\left( {\ln \left( {1 + \gamma_{SD,i}} \right)\ln \left( {1 + \gamma_{SD,j}} \right)} \right)
 \\
% \approx \frac{1}{{{\Gamma ^3}\left( m \right)}}\int\nolimits_{t = 0}^\infty  {{t^{m - 1}}{e^{ - \left( {1 + \sum\limits_{l = i,j} {\frac{{{\lambda _{SD,l}}^2}}{{1 - {\lambda _{SD,l}}^2}}} } \right)t}}} \\
% \times \prod\limits_{l = i,j} {\sum\limits_{{p_l} = 1}^{{N_Q}} {{w_{{p_l}}}{}_0{F_1}\left( {;m;\frac{{{\lambda _{SD,l}}^2}}{{1 - {\lambda _{SD,l}}^2}}{\zeta _{{p_l}}}t} \right)} } \\
%\times \ln \left( {1 + \frac{{\Omega {'_{SD,l}}\left( {1 - {\lambda _{SD,l}}^2} \right)}}{m}{\zeta _{{p_l}}}} \right)dt \\
 \approx \sum\limits_{{p_i},{p_j} \in \left[ {1,{N_Q}} \right]} {\frac{{\prod\limits_{l = i,j} {{w_{{p_l}}}\ln \left( {1 + \frac{{\Omega {'_{SD,l}}\left( {1 - {\lambda _{SD,l}}^2} \right)}}{m}{\zeta _{{p_l}}}} \right)} }}{{{\Gamma ^3}\left( m \right){{\left( {1 + \sum\limits_{l = i,j} {\frac{{{\lambda _{SD,l}}^2}}{{1 - {\lambda _{SD,l}}^2}}} } \right)}^m}}}} \\
 \times \int\nolimits_{t = 0}^\infty  {{t^{m - 1}}{e^{ - t}}} \prod\limits_{l = i,j} {{}_0{F_1}\left( {;m;\varpi _{i,j}^l{\zeta _{{p_l}}}t} \right)} dt,
\end{multline}
where ${\varpi ^l_{i,j}} = {\left( {1 + \mathop {\mathop \sum }\limits_{k = i,j} \frac{{{\lambda _{SD,k}}^2}}{{1 - {\lambda _{SD,k}}^2}}} \right)^{ - 1}}\frac{{{\lambda _{SD,l}}^2}}{{1 - {\lambda _{SD,l}}^2}},\, l = i,j$. Then putting (\ref{eqn_use_int}) into (\ref{eqn_exp_log_log_fur}) and together with (\ref{eqn_exp_log_x1_fina_quad}), the covariance $\sigma ^2$ can be  obtained as (\ref{eqn_sigma}).

\section{Derivation of $c_{l,k}$}\label{app_coeff}
Clearly from (\ref{eqn:def_c_lk}), to derive each element in ${\bf c}_l$, the terms ${\nu _{2l}} - {{\bf{v}}_{l - 1}}^{\rm T}{{\bf{A}}_{l - 1}}^{ - 1}{{\bf{v}}_{l - 1}}$ and ${\bf{A}}{_l}^{ - 1}{\bf{v}}{_l}$ should be determined first.

For the term ${\nu _{2l}} - {{\bf{v}}_{l - 1}}^{\rm T}{{\bf{A}}_{l - 1}}^{ - 1}{{\bf{v}}_{l - 1}}$, by using elementary transformation on the determinant of ${{\bf{A}}_l}^{ - 1}$ in (\ref{eqn:mat_A_iter}), we have
\begin{align}\label{eqn:deter_der_term_rel}
&\det \left( {{\bf{A}}{_l}^{ - 1}} \right) = \notag \\
&\det \left( {\left[ {\begin{array}{*{20}{c}}
{{\bf{A}}{_{l - 1}}^{ - 1}}&{\bf 0}\\
{ - \frac{{{\bf{v}}{_{l - 1}}^{\rm T}{\bf{A}}{_{l - 1}}^{ - 1}}}{{\nu {_{2l}} - {\bf{v}}{_{l - 1}}^{\rm T}{\bf{A}}{_{l - 1}}^{ - 1}{\bf{v}}{_{l - 1}}}}}&{\frac{1}{{\nu {_{2l}} - {\bf{v}}{_{l - 1}}^{\rm T}{\bf{A}}{_{l - 1}}^{ - 1}{\bf{v}}{_{l - 1}}}}}
\end{array}} \right]} \right).
\end{align}
It then follows that
\begin{equation}\label{eqn:term_con_deter1}
{\nu _{2l}} - {{\bf{v}}_{l - 1}}^{\rm T}{{\bf{A}}_{l - 1}}^{ - 1}{{\bf{v}}_{l - 1}} = \frac{{\det \left( {{{\bf{A}}_l}} \right)}}{{\det \left( {{{\bf{A}}_{l - 1}}} \right)}}.
\end{equation}
With respect to the term ${\bf{A}}{_l}^{ - 1}{\bf{v}}{_l}$, by expressing ${\bf{A}}{_l}^{ - 1}$ in terms of cofactors as
\begin{equation}\label{eqn:inverse_A}
{\bf{A}}{_l}^{ - 1} = \frac{1}{{\det \left( {{\bf{A}}{_l}} \right)}}\left[ {\begin{array}{*{20}{c}}
{{{A}}_l^{0,0}}&{{{A}}_l^{1,0}}& \cdots &{{{A}}_l^{N,0}}\\
{{{A}}_l^{0,1}}&{{{A}}_l^{1,1}}& \cdots &{{{A}}_l^{N,1}}\\
 \vdots & \vdots & \ddots & \vdots \\
{{{A}}_l^{0,N}}&{{{A}}_l^{1,N}}& \cdots &{{{A}}_l^{N,N}}
\end{array}} \right],
\end{equation}
${{\bf{A}}{_l}}^{ - 1}{\bf{v}}{_l}$ is rewritten as
\begin{equation}\label{eqn:term_con_deter}
{\bf{A}}{_l}^{ - 1}{\bf{v}}{_l} =  - \frac{1}{{\det \left( {{\bf{A}}{_l}} \right)}}{\left[ {\begin{array}{*{20}{c}}
{{{A}}_{l + 1}^{l + 1,0}}&{{{A}}_{l + 1}^{l + 1,1}}& \cdots &{{{A}}_{l + 1}^{l + 1,l}}
\end{array}} \right]^{\rm{T}}},
\end{equation}
where ${{{A}}_{l+1}^{i,j}}$ denotes the cofactor of the $(i,j)$-th entry of ${\bf {A}}_{l+1}$. Thus by plugging (\ref{eqn:term_con_deter1}) and (\ref{eqn:term_con_deter}) into (\ref{eqn:def_c_lk}), ${\bf{c}}{_l}$ is given as
\begin{equation}\label{eqn:c_l_as_cof}
{\bf{c}}{_l} = \frac{1}{{\sqrt {\det \left( {{{\bf{A}}_l}} \right)\det \left( {{\bf{A}}{_{l - 1}}} \right)} }}{\left[ {\begin{array}{*{20}{c}}
{{{A}}_l^{l,0}}&{{{A}}_l^{l,1}}& \cdots &{{{A}}_l^{l,l}}
\end{array}} \right]^{\rm{T}}}.
\end{equation}

By defining ${\bf{c}}{_l}=[c_{l,0}, c_{l,1}, \cdots, c_{l,l}]$ and using (\ref{eqn:c_l_as_cof}), the $k$th element $c_{l,k}$ can be expressed as
\begin{equation}\label{eqn:c_l_k_rew}
{c_{l,k}} = \frac{{{{A}}_l^{l,k}}}{{\sqrt {\det \left( {{\bf{A}}{_l}} \right)\det \left( {{\bf{A}}{_{l - 1}}} \right)} }} = \frac{{{{\left( { - 1} \right)}^{l + k}}{{U}}_l^{l,k}}}{{\sqrt {{{U}}{_l}{{U}}{_{l - 1}}} }},
\end{equation}
where ${{U}}{_l} = \det \left( {{\bf{A}}{_l}} \right)$, and ${{U}}_l^{l,k}$ denotes the corresponding minor of ${{A}}_l^{l,k}$. With the exponential form of $\nu_k$, both $U_l$ and $U_l^{l,k}$ can be simplified as Vandermonde determinants.
%By writing ${\bf{c}}{_l}=[c_{l,0}, c_{l,1}, \cdots, c_{l,l}]$ and using (\ref{eqn:c_l_as_cof}), $c_{l,k}$ can be expressed as
%\begin{equation}\label{eqn:c_l_k_rew}
%{c_{l,k}} = \frac{{{{A}}_l^{l,k}}}{{\sqrt {\det \left( {{\bf{A}}{_l}} \right)\det \left( {{\bf{A}}{_{l - 1}}} \right)} }} = \frac{{{{\left( { - 1} \right)}^{l + k}}{{U}}_l^{l,k}}}{{\sqrt {{{U}}{_l}{{U}}{_{l - 1}}} }}
%\end{equation}
%where ${{U}}{_l} = \det \left( {{\bf{A}}{_l}} \right)$, and ${{U}}_l^{l,k}$ denotes a minor corresponding to ${{A}}_l^{l,k}$, more precisely
%\begin{equation}\label{eqn:min_exp}
%{{U}}_l^{l,k} = \left| {\begin{array}{*{20}{c}}
%{\nu {_0}}& \cdots &{\nu {_{k - 1}}}&{\nu {_{k + 1}}}& \cdots &{\nu {_{l - 1}}}&{\nu {_l}}\\
%{\nu {_1}}& \cdots &{\nu {_k}}&{\nu {_{k + 2}}}& \cdots &{\nu {_l}}&{\nu {_{l + 1}}}\\
% \vdots & \vdots & \vdots & \vdots & \vdots & \vdots & \vdots \\
%{\nu {_{l - 1}}}& \cdots &{\nu {_{l + k}}}&{\nu {_{N + k + 1}}}& \cdots &{\nu {_{2l - 2}}}&{\nu {_{2l - 1}}}
%\end{array}} \right|
%\end{equation}
Specifically, ${{U}}{_l}$ can be written as
\begin{align}\label{eqn_U_n_form1}
{{U}}{_l} &= {{\left| {{\nu _{i + j}}} \right|}_{i,j \in \left[ {0,l} \right]}} = \left| {{e^{\frac{{{{\left( {i + j} \right)}^2}\sigma {^2}}}{2} - \left( {i + j} \right)\mu}}} \right|_{i,j \in \left[ {0,l} \right]} \notag\\
 %&= {e^{\sum\limits_{i = 0}^l {\left( {\frac{{{i^2}}}{2}\sigma {^2} - i\mu } \right)}  + \sum\limits_{j = 0}^l {\left( {\frac{{{j^2}}}{2}\sigma {^2} - j\mu } \right)} }}\left| {{e^{ij\sigma {^2}}}} \right|_{i,j \in \left[ {0,l} \right]} \notag\\
 & = {e^{ - l\left( {l + 1} \right)\mu }} {\varsigma ^{\frac{{l\left( {l + 1} \right)\left( {2l + 1} \right)}}{6}}}\left| {{\varsigma ^{ij}}} \right|_{i,j \in \left[ {0,l} \right]}
\end{align}
where $\varsigma = {e^{{{\sigma } ^2}}}$, and the notation $\left| {{\nu _{i + j}}} \right|_{i,j \in \left[ {0,l} \right]}$ represents the determinant of a matrix with $\nu _{i + j}$ as its $(i,j)$-th entry. Clearly, $\left| {{\varsigma^{ij}}} \right|_{i,j \in \left[ {0,l} \right]}$ is a Vandermonde determinant, henceforth ${{U}}{_l}$ can be obtained as
\begin{equation}\label{eqn_U_n_form2}
  {{{U}}{_l}} = {e^{ - l\left( {l + 1} \right)\mu }} {\varsigma ^{\frac{{l\left( {l + 1} \right)\left( {2l + 1} \right)}}{6}}} \prod\nolimits_{l \ge i > j \ge 0} {\left( {{\varsigma^i} - {\varsigma^j}} \right)}.
\end{equation}
Similarly, ${{U}}_l^{l,k}$ can also be simplified as a Vandermonde determinant given by
\begin{align}\label{eqn_U_n_k}
{U}_l^{l,k} &= {e^{\sum\limits_{i = 0}^{l - 1} {\left( {\frac{{{i^2}}}{2}{\sigma ^2} - i\mu } \right)}  + \sum\limits_{j = 0 \wedge j \ne k}^l {\left( {\frac{{{j^2}}}{2}{\sigma ^2} - j\mu } \right)} }}\notag \\
& \times {{\left| {{\varsigma^{ij}}} \right|}_{i \in \left[ {0,l - 1} \right],j \in \left[ {0,l} \right] \wedge j \ne k}} \notag\\
  &= \frac{{U_l}}{{{{\left( { - 1} \right)}^{l - k}}{\nu _l}{\nu _k}\prod\nolimits_{t = 0 \wedge t \ne k}^l {\left( {{\varsigma^k} - {\varsigma^t}} \right)} }}.
\end{align}

By substituting (\ref{eqn_U_n_form2}) and (\ref{eqn_U_n_k}) into (\ref{eqn:c_l_k_rew}), it yields
\begin{equation}\label{eqn:c_l_k_exp_der1}
\begin{array}{l}
{c_{l,k}}  = \sqrt {\frac{{U{_l}}}{{U{_{l - 1}}}}} \frac{1}{{\nu {_l}\nu {_k}\prod\nolimits_{t = 0 \wedge t \ne k}^l {\left( {{\varsigma^k} - {\varsigma^t}} \right)} }} = \frac{{\sqrt {\prod\nolimits_{l - 1 \ge t \ge 0} {\left( {{\varsigma^l} - {\varsigma^t}} \right)} } }}{{\nu {_k}\prod\nolimits_{t = 0 \wedge t \ne k}^l {\left( {{\varsigma^k} - {\varsigma^t}} \right)} }}
\end{array}.
\end{equation}
After some algebraic manipulations, (\ref{eqn:c_l_k_exp_der1}) is finally simplified as (\ref{eqn_orth_coeff}).
\bibliographystyle{IEEEtran}
\bibliography{manuscript_1}
\end{document}